\documentclass[1p]{elsarticle}
\usepackage{hyperref}
\journal{}
\usepackage{fix-cm}
\usepackage{mathtools}
\usepackage{amsmath,amsthm}
\usepackage{graphicx}
\usepackage{latexsym,enumerate,amssymb,amsbsy}
\usepackage{graphics,color}
\usepackage{verbatim}
\usepackage{url}
\hyphenation{op-tical net-works semi-conduc-tor}
\usepackage{mathptmx}
\usepackage{pgf,tikz}
\usetikzlibrary{trees,automata,arrows,shapes.geometric}
\usepackage{amsfonts}
\usepackage{stmaryrd}
\usepackage{algorithm}
\usepackage[noend]{algpseudocode}
\usepackage{url}
\usepackage{colortbl}
\newtheorem{proposition}{Proposition}
\newtheorem{theorem}[proposition]{Theorem}
\newcommand{\proofofref}{}
\newproof{zproofof}{Proof of \proofofref}

\newtheorem{lemma}[proposition]{Lemma}

\newtheorem{example}{Example}

\newtheorem{remark}{Remark}

\newcommand{\Mod}[1]{\ (\textup{mod}\ #1)}

\def\CC{{\mathcal{C}}}
\def\CP{{\mathcal{P}}}

\DeclareMathOperator{\lcm}{lcm}


\usepackage{listings}
\usepackage{xcolor}

\definecolor{codegreen}{rgb}{0,0.6,0}
\definecolor{codegray}{rgb}{0.5,0.5,0.5}
\definecolor{codepurple}{rgb}{0.58,0,0.82}
\definecolor{backcolour}{rgb}{0.95,0.95,0.92}

\lstdefinestyle{mystyle}{
	backgroundcolor=\color{backcolour},
	commentstyle=\color{codegreen},
	keywordstyle=\color{magenta},
	numberstyle=\tiny\color{codegray},
	stringstyle=\color{codepurple},
	basicstyle=\ttfamily\footnotesize,
	breakatwhitespace=false,
	breaklines=true,
	captionpos=b,
	keepspaces=true,
	numbers=left,
	numbersep=5pt,
	showspaces=false,
	showstringspaces=false,
	showtabs=false,
	tabsize=2
}

\lstset{style=mystyle}


\algdef{SE}[DOWHILE]{Do}{doWhile}{\algorithmicdo}[1]{\algorithmicwhile\ #1}

\begin{document}
\begin{frontmatter}
\title{An Efficiently Generated Family of Binary de Bruijn Sequences}

\author[zzu]{Yunlong Zhu}
\ead{zhu2010\_546@163.com}

\cortext[cor1]{Corresponding author}
\author[zzu]{Zuling Chang\corref{cor1}}
\ead{zuling\_chang@zzu.edu.cn}
		
\author[ntu]{Martianus Frederic Ezerman}
\ead{fredezerman@ntu.edu.sg}
		
\author[wang]{Qiang Wang}
\ead{wang@math.carleton.ca}
		
\address[zzu]{School of Mathematics and Statistics, Zhengzhou University,
450001 Zhengzhou, China.}
		
\address[ntu]{School of Physical and Mathematical Sciences, Nanyang Technological University,\\
21 Nanyang Link, Singapore 637371.}

\address[wang]{School of Mathematics and Statistics, Carleton University,\\
1125 Colonel By Drive, Ottawa, Ontario, K1S 5B6, Canada.}
		
\begin{abstract}
We study how to generate binary de Bruijn sequences efficiently from the class of simple linear feedback shift registers with feedback function $f(x_0, x_1, \ldots, x_{n-1}) = x_0 + x_1 + x_{n-1}$ for $n \geq 3$, using the cycle joining method. Based on the properties of this class of LFSRs, we propose two new generic successor rules, each of which produces at least $2^{n-3}$ de Bruijn sequences. These two classes build upon a framework proposed by Gabric, Sawada, Williams and Wong in \emph{Discrete Mathematics} vol.~341, no.~11, pp. 2977--2987, November 2018. Here we introduce new useful choices for the uniquely determined state in each cycle to devise valid successor rules. These choices significantly increase the number of de Bruijn sequences that can be generated. In each class, the next bit costs $O(n)$ time and $O(n)$ space for a fixed $n$.
\end{abstract}
		
\begin{keyword}
Binary periodic sequence \sep de Bruijn sequence \sep feedback shift register \sep successor rule \sep cycle joining method.
\end{keyword}
		
\end{frontmatter}
	

\section{Introduction}\label{sec:intro}

A binary {\it de Bruijn sequence} of order $n$ is a $2^n$-periodic sequence in which each $n$-tuple occurs exactly once per period. There are $2^{2^{n-1}-n}$ such sequences~\cite{Bruijn46}. They have been studied for a long time as they appeared in multiple disguises~\cite{Ral82}. More details are supplied in Fredricksen's survey~\cite{Fred82}. Certain families of such sequences have been found useful in far ranging application domains that include bioinformatics, communication systems, coding theory, and cryptography.

One can build de Bruijn sequences of order $n$ from the Hamiltonian paths of an $n$-dimensional de Bruijn graph over $2$ symbols. This is equivalent to finding Eulerian cycles of an $(n-1)$-dimensional de Bruijn graph. While a complete enumeration of all such cycles can be done, for example by Fleury's algorithm \cite{Fleury}, this rather naive approach is highly inefficient in storage requirement. It remains a major objective to strike a good balance between minimizing the computational costs and maximizing the number of sequences that can be explicitly built. On top of this consideration, depending on the specific application domains, additional requirements may be imposed. In cryptography, for instance, the preference is towards de Bruijn sequences with particular linear complexity profiles while in DNA fragment assembly certain substrings may be more or less desirable than others.

A well-known generic construction approach is the {\it cycle joining  method} (CJM) (see, for examples, \cite{Fred82} and \cite{Golomb}). The main idea of this method is to join all cycles produced from a given Feedback Shift Register (FSR) into a single cycle by interchanging the successors of some pairs of conjugate states. A good number of CJM-based fast algorithms are already in the literature. Most of them produce a very limited number of sequences. Let us sample a few. As was shown in~\cite{Fred72}, one can generate the {\tt granddaddy} de Bruijn sequence in $O(n)$ time and $O(n)$ space per bit. A related sequence, the {\tt grandmama}, was built in~\cite{Dragon18}. Etzion and Lempel proposed some algorithms to generate de Bruijn sequences based on the {\it pure cycling register} (PCR) and the {\it pure summing register} (PSR) in~\cite{Etzion84}. Their algorithms generate a remarkable number, an exponential of $n$, of sequences at the expense of higher memory requirement to store some special states. Earlier, Huang gave another construction that joins the cycles of the {\it complemented cycling register} (CCR) in~\cite{Huang90}. Jansen, Franx, and Boekee established a requirement to determine some conjugate pairs in \cite{Jansen91}, leading to another fast algorithm. In \cite{Sawada16}, Sawada, Williams, and Wong proposed a simple de Bruijn sequence construction, which turned out to be a special case of the method in~\cite{Jansen91}. Gabric, Sawada, Williams, and Wong generalized the last two results to form a simple successor rule framework that yield three new and simple  de Bruijn constructions based on the PCR and the CCR in~\cite{Gabric18}. Further generalization to the constructions of $k$-ary de Bruijn sequences was done in~\cite{Sawada17} and \cite{Gabric19}. Chang, Ezerman, Ke, and Wang recently proposed a new criteria for successor rules in~\cite{chang19}. They applied the criteria to efficiently construct numerous de Bruijn sequences based on the PCR and the PSR by imposing new relations on the respective generated cycles.

In this paper we provide more successor rules to generate a new family of de Bruijn sequences. We use the CJM to join all cycles generated by a special LFSR whose characteristic polynomial is $f(x)=x^n+x^{n-1}+x+1 \in \mathbb{F}_2[x]$ for $n \geq 3$. The cycles generated by this LFSR correspond to the cycles of the PCR and CCR of order $n-1$. Sala, in a Master's thesis \cite{SalaTh}, studied this LFSR and proposed a successor rule to generate de Bruijn sequences in $O(n)$ time and $O(n)$ space per bit. In a recent preprint \cite{Salasame}, Sala, Sawada, and Alhakim noticed that the states in each cycle produced by this LFSR have the same run-length. They  then proposed a new successor rule based on the so-called {\it run-length order} to generate the famous {\tt  prefer-same} de Bruijn sequence \cite{Eldert58} using $O(n)$ time per bit and only $O(n)$ space. They named the LFSR the {\it pure run-length register} (PRR). This, along with the general framework to generate de Bruijn sequences proposed in \cite{Gabric18} and \cite{Salasame}, can be found in \cite{Sawada-web}.

Our main contribution in this paper is to construct two new generic successor rules based on the special LFSR by applying the main results of \cite{Gabric18}. The number of de Bruijn sequences generated by our new rules is exponential in the order $n$ of the FSR. Our method runs in $O(n)$ time and $O(n)$ space per generated bit. More explicitly, we accomplish the following tasks. 

\begin{enumerate}
\item We take a different point of view in studying the PRR of order $n \geq 3$ from the one already done in~\cite{SalaTh}. Here we discuss the properties of the PRR of order $n$ via the characteristic polynomials of the PCR and the CCR of order $n-1$.

\item Two new generic successor rules are presented based on the PRR to generate de Bruijn sequences. The correctness of the rules are demonstrated by applying the framework proposed in \cite{Gabric18}. We introduce the notion of \emph{a critical set of spanning conjugate pairs} to effectively define a spanning tree of the cycles induced by the PRR. In each generic rule, a CCR-related cycle has a unique parent  while a PCR-related cycle may have a variety of possible parents, depending on the specified critical set.

\item For each generic successor rule, we construct critical sets to uniquely identify the respective parents of the PCR-related cycles efficiently in both time and space. Thus, each rule leads to an exponential number of de Bruijn sequences that can be generated from among the generic successors. Each of the formulated algorithms can generate the next bit of a de Bruijn sequence in $O(n)$ time using $O(n)$ space.
\end{enumerate}

In terms of organization, Section~\ref{sec:prelim} gathers some preliminary notions and useful known results. Sections~\ref{sec:class1} and~\ref{sec:class2} provide the treatment on the two classes, respectively. We end with a brief discussion on the required computational resources, a few directions for follow-up investigation, and the \texttt{C} code of our basic implementation.

\section{Preliminaries}\label{sec:prelim}

An {\it $n$-stage shift register} is a clock-regulated circuit with the following properties. It has $n$ consecutive storage units. Each unit holds a bit. As the clock pulses the circuit shifts the bit in each unit to the next stage. The register turns into a binary code generator if one appends a feedback loop that outputs a new bit $s_n$ based on the $n$-stage {\it initial state} ${\bf s}_0= s_0,\ldots,s_{n-1}$. The corresponding Boolean {\it feedback function} $f(x_0,\ldots,x_{n-1})$ outputs $s_n$ on input ${\bf s}_0$. A {\it feedback shift register} (FSR), therefore, outputs a binary sequence ${\bf s}=\{s_i\}=s_0,s_1,\ldots,s_n,$ $\ldots$ that satisfies the recursive relation
\[
s_{n+\ell} = f(s_{\ell},s_{\ell+1},\ldots,s_{\ell+n-1}) \text{ for } \ell = 0,1,2,\ldots.
\]
For $N \in \mathbb{N}$, if $s_{i+N}=s_i$ for all $i \geq 0$, then ${\bf s}$ is {\it $N$-periodic} or {\it with period $N$} and we write ${\bf s}= (s_0,s_1,s_2,\ldots,s_{N-1})$. The least among all periods of ${\bf s}$ is called the {\it least period} of ${\bf s}$.

We call ${\bf s}_i= s_i,s_{i+1},\ldots,s_{i+n-1}$ {\it the $i$-th state} of ${\bf s}$. The {\it predecessor} and the {\it successor} of ${\bf s}_i$ are denoted, respectively, by ${\bf s}_{i-1}$ and ${\bf s}_{i+1}$. For $s \in \mathbb{F}_2$, let $\bar{s} :=s+1 \in \mathbb{F}_2$. The definition extends to any binary vector or sequence. If ${\bf s}= s_0,s_1,\ldots,s_{n-1}, \ldots$, then $\bar{\bf s} :=
\bar{s}_0,\bar{s}_1,\ldots,\bar{s}_{n-1}, \ldots$. For an arbitrary state ${\bf v}=v_0,v_1,\ldots,v_{n-1}$ of ${\bf s}$, the states
\[
\hat{{\bf v}}:=\bar{v}_0,v_1,\ldots,v_{n-1} \mbox{ and }
\tilde{{\bf v}}:=v_0,\ldots,v_{n-2},\bar{v}_{n-1}
\]
are the {\it conjugate} state and {\it companion} state of ${\bf v}$, respectively. Hence, $({\bf v}, \hat{{\bf v}})$ is a {\it conjugate pair} and $({\bf v}, \tilde{{\bf v}})$ is a {\it companion pair}.

Any FSR with feedback function $f$, on distinct $n$-stage initial states, generates distinct sequences that form a set $G(f)$ of cardinality $2^n$. All sequences in $G(f)$ are periodic if and only if the feedback function $f$ is {\it nonsingular}, that is, $f(x_0,x_1,\ldots,x_{n-1}) = x_0+h(x_1,\ldots,x_{n-1})$ for some Boolean function $h(x_1,\ldots,x_{n-1})$ whose domain is $\mathbb{F}_2^{n-1}$~\cite[p.~116]{Golomb}. Here we deal only with nonsingular feedback functions. An FSR is {\it linear} or an LFSR if its feedback function is
$f(x_0,\ldots,x_{n-1}) = x_0+a_1x_1 + \cdots + a_{n-1} x_{n-1}$. The polynomial
\[
f(x)=x^n + a_{n-1} x^{n-1} +\cdots  + a_1 x + 1\in\mathbb{F}_2[x]
\]
is the {\it characteristic polynomial} of the LFSR. Otherwise, the FSR is {\it nonlinear} or an NLFSR. Further properties of LFSRs are treated in \cite{GG05} and \cite{LN97}.

The {\it left shift operator} $L$ maps a periodic sequence
\[
\mathbf{s}:=(s_0,s_1,\ldots,s_{N-1}) \mapsto L \mathbf{s}:=(s_1,\ldots,s_{N-1},s_0),
\]
with the convention that $L^0$ fixes $\mathbf{s}$. The set
\[
[{\bf s}]:=\left\{{\bf s},L{\bf s},\ldots,L^{N-1}{\bf s} \right\}
\]
is a {\it shift equivalent class}. Sequences in the same shift equivalent class correspond to the same cycle in the state diagram of the FSR \cite{GG05}. A {\it cycle} is a periodic sequence in a shift equivalent class. If an FSR with feedback function $f$ generates exactly $r$ distinct cycles $C_1, C_2, \ldots, C_r$, then its {\it cycle structure} is
\[
\Omega(f)=\{C_1, C_2, \ldots, C_r\}.
\]
A cycle can also be viewed as a set of $n$-stage states in the corresponding periodic sequence. When $r=1$, the corresponding FSR is of {\it maximal length} and its output is a de Bruijn sequence of order $n$.

The unique lexicographically least $n$-stage state in each cycle $C \in \Omega(f)$ is designated as the {\it cycle representative} of $C$ in \cite{Jansen91}. One can impose a lexicographic order $\prec_{\rm lex}$ on the cycles, based on their representatives,  by saying that $C_i { \prec}_{\rm lex} C_j$ if and only if the cycle representative of $C_i$ is lexicographically smaller than that of $C_j$.

If two distinct cycles $C_i$ and $C_j$ in $\Omega(f)$ have the property that
the state  ${\bf v}=v_0,v_1,\ldots,v_{n-1} \in C_i$ has its conjugate state $\hat{{\bf v}} \in C_j$, then interchanging the successors of ${\bf v}$ and $\hat{{\bf v}}$ joins $C_i$ and $C_j$ into a single cycle. The feedback function of this new cycle is
\begin{equation}\label{eq:newfeedback}
\hat{f}:=f(x_0,x_1,\ldots,x_{n-1})+\prod_{i=1}^{n-1}(x_i+\bar{v}_i).
\end{equation}
Similarly, if the companion states ${\bf v}$ and $\tilde{{\bf v}}$ are in two distinct cycles, then interchanging their predecessors joins the two cycles.
If either process continues until all of the cycles in $\Omega(f)$ can be joined into a single cycle, then we obtain a de Bruijn sequence. This construction is the {\it cycle joining method} (CJM).

Given an FSR with feedback function $f$, its {\it adjacency graph} $G_f$, or simply $G$ if $f$ is clear, is an undirected multigraph whose vertices correspond to the cycles in $\Omega(f)$. Two distinct vertices are {\it adjacent} if they share a conjugate (or companion) pair. The number of edges between them is the number of shared conjugate (or companion) pairs, with a specific pair assigned to each edge. We know from \cite{AEB87} that there is a bijection between the set of spanning trees of $G_f$ and the set of all inequivalent de Bruijn sequences constructible by the CJM on input $f$.

We now introduce two simple FSRs that we will often use. The {\it pure cycling register} (PCR) {\it of order} $n$ is an LFSR with feedback function and characteristic polynomial
\begin{equation}\label{eq:PCR}
f_{\rm PCR}(x_0,x_1,\ldots,x_{n-1})= x_0 \mbox{ and }
f_{\rm PCR}(x)=x^n+1.
\end{equation}
Each cycle generated by the PCR of order $n$ is $n$-periodic and has the form $(c_0,c_1,\ldots,c_{n-1})$. The {\it complemented cycling register} (CCR) {\it of order} $n$ is an NLFSR with feedback function
\begin{equation}\label{eq:CCR}
f_{\rm CCR}(x_0,x_1,\ldots,x_{n-1})= x_0 + 1.
\end{equation}
Each cycle generated by the CCR is $2n$-periodic and has the form
\begin{equation}\label{ccr-seq}
(c_0,c_1,\ldots,c_{n-1},\bar{c}_0,\bar{c}_1,\ldots,\bar{c}_{n-1}),
\end{equation}
that is, we always have $c_i=\bar{c}_{i+n}$ for all $i \geq 0$. The {\it weight} of a cycle $C$ or a state $\mathbf{v}$, denoted respectively by ${\rm wt}(C)$ and ${\rm wt}(\mathbf{v})$, is the number of $1$s in the cycle or the state. It is clear that the weight of each CCR cycle is $n$.

For a given order $n$, the usual cycle representatives of a cycle in $\Omega(f_{\rm PCR})$ and a cycle in $\Omega(f_{\rm CCR})$ are called the {\it necklace} and the {\it co-necklace}, respectively. It is known, for example in \cite[Algorithm~2]{Gabric18}, that testing if a state is the necklace or the co-necklace of a cycle takes $O(n)$ time and $O(n)$ space. Most fast algorithms to generate de Bruijn sequences work on either the PCR or the CCR.

Although the CCR is not linear, the corresponding sequences can be generated by an LFSR. Each sequence $\mathbf{s} = (s_0,s_1,\ldots,s_{2n-3})$ of the CCR of order $n-1$, with $n \geq 3$, satisfies
\[
s_{n-1+i}=s_i+1 \  \mbox{and} \ s_{n+i}=s_{i+1}+1, \ \mbox{for} \ i\geq 0.
\]
Combining them, we obtain the relation
\[
s_{n+i}=s_i+s_{i+1}+s_{n-1+i}, \ \mbox{for} \ i\geq 0.
\]
Hence, $\mathbf{s}$ can be generated by an LFSR of order $n$ with characteristic polynomial and feedback function
\begin{equation}\label{equ:1}
h(x) = x^{n} + x^{n-1} + x + 1 = (x^{n-1}+1) (x+1) \mbox{ and }
h(x_0,x_1,\ldots,x_{n-1})=x_0+x_1+x_{n-1}.
\end{equation}

The LFSR with characteristic polynomial $h(x)$ has several good properties.
The authors of \cite{Salasame} used it to construct a successor rule that generates the {\tt prefer-same} de Bruijn sequence in $O(n)$ time and $O(n)$ space per bit. They named the said LFSR the {\it pure run-length register} (PRR). We henceforth adopt the name and refer to $h(x)$ in Equation (\ref{equ:1}) as $f_{\rm PRR, \it n}(x)$. When $n$ is clear in the context, we also use the abbreviation $f_{\rm PRR}(x)$. The same practice of specifying the order for precision when necessary applies to the other LFSRs.

\begin{lemma}\label{lem:1}
Each cycle in $\Omega(f_{\rm PRR})$ of order $n$ is either a PCR cycle or a CCR cycle of order $n-1$.
\end{lemma}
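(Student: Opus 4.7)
The plan is to factor the PRR recurrence into a single shift-invariant quantity that distinguishes the two cases. First I would rewrite the feedback relation $s_{i+n} = s_i + s_{i+1} + s_{i+n-1}$ as
\[
s_{i+n} + s_{i+n-1} = s_i + s_{i+1} \qquad \text{for all } i \geq 0,
\]
and, after substituting $u_i := s_i + s_{i+n-1}$, read off $u_{i+1} = u_i$. This would make $(u_i)$ a constant sequence with value $c := s_0 + s_{n-1} \in \mathbb{F}_2$ determined by the first and last entries of the initial state.

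The dichotomy in the lemma would then correspond to the value of $c$. If $c = 0$, then $s_{i+n-1} = s_i$ for every $i$, so $\mathbf{s}$ is $(n-1)$-periodic; by~(\ref{eq:PCR}) it is exactly a sequence generated by the PCR of order $n-1$, and the cycle containing $(s_0,\ldots,s_{n-1})$ in $\Omega(f_{\rm PRR})$ corresponds to a cycle in the PCR of order $n-1$. If $c = 1$, then $s_{i+n-1} = \bar{s}_i$ for every $i$, which is precisely the CCR-of-order-$(n-1)$ relation in~(\ref{ccr-seq}), so the cycle corresponds to one in the CCR of order $n-1$.

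No serious obstacle is expected; the whole argument is a one-line manipulation that exploits the factorization $f_{\rm PRR}(x) = (x^{n-1}+1)(x+1)$ via the annihilated sequence $(u_i)$. The one subtlety worth flagging is well-definedness at the cycle level: since $u_i$ is shift-invariant, every state appearing on a given PRR cycle will yield the same value of $c$, so the assignment of each cycle to either the PCR-class or the CCR-class is unambiguous.
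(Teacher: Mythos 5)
Your proof is correct, but it takes a genuinely different route from the paper. The paper argues by set containment and counting: it notes that every CCR sequence of order $n-1$ satisfies the PRR recurrence (shown just before the lemma), invokes the divisibility criterion $G(f_1)\subseteq G(f_2)\iff f_1(x)\mid f_2(x)$ with $(x^{n-1}+1)\mid f_{{\rm PRR},n}(x)$ to get $G(f_{{\rm PCR},n-1})\subseteq G(f_{{\rm PRR},n})$, and then uses disjointness plus the cardinality count $2^{n-1}+2^{n-1}=2^n$ to conclude that $G(f_{{\rm PRR},n})$ is exactly the union, so every PRR cycle lies in one of the two families. You instead prove the forward direction directly: the quantity $u_i=s_i+s_{i+n-1}$ is invariant under the PRR recurrence, so it is a constant $c$ along any PRR sequence, and $c=0$ gives the order-$(n-1)$ PCR relation $s_{i+n-1}=s_i$ while $c=1$ gives the CCR relation $s_{i+n-1}=\bar{s}_i$; your well-definedness remark (shift-invariance of $c$) correctly handles the cycle-level statement. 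Your argument is more elementary (no appeal to the LFSR divisibility lemma or to the cardinalities of the solution spaces) and it yields as a by-product the explicit classification criterion $c_0+c_{n-1}$, which the paper only records afterwards in a separate Remark as a consequence of its proof. What the paper's approach buys in exchange is the full set equality $G(f_{{\rm PRR},n})=G(f_{{\rm PCR},n-1})\cup G(f_{{\rm CCR},n-1})$, i.e.\ also the converse containments, which your invariant argument does not address (nor does the lemma require it).
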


\begin{proof}
We have seen that all sequences of the CCR of order $n-1$ can be generated by the PRR of order $n$. Hence, $G(f_{{\rm CCR},n-1})\subseteq G(f_{{\rm PRR},n})$. For two LFSRs with respective characteristic polynomials $f_1(x)$ and $f_2(x)$, one has $G(f_1)\subseteq G(f_2)$ if and only if $f_1(x)$ divides $f_2(x)$~\cite[Lemma 4.2 (a)]{GG05}. Since $(x^{n-1}+1)$ divides $f_{{\rm PRR},n}(x)$, we have $G(f_{{\rm PCR},n-1}) \subseteq G(f_{{\rm PRR},n})$. Hence, 
\[
G(f_{{\rm PCR},n-1})\cup G(f_{{\rm CCR},n-1})\subseteq G(f_{{\rm PRR},n}).
\]
Since $G(f_{{\rm PCR},n-1})$ and $G(f_{{\rm CCR},n-1})$ are disjoint, with
$|G(f_{{\rm PCR},n-1})|=|G(f_{{\rm CCR},n-1})|=2^{n-1}$, and
$|G(f_{{\rm PRR},n})|=2^n$, it is clear that
\[
G(f_{{\rm PRR},n}) = G(f_{{\rm PCR},n-1}) \cup G(f_{{\rm CCR},n-1}).
\]
Thus, each cycle in $\Omega(f_{{\rm PRR},n})$ is a cycle either in
$\Omega(f_{{\rm PCR},n-1})$ or in $\Omega(f_{{\rm CCR},n-1})$.
\end{proof}

\begin{remark}
Let an $n$-stage state $c_0,c_1,\ldots,c_{n-1}$ of a cycle $C$ in $\Omega(f_{\rm PRR})$ be given. If $c_0=c_{n-1}$, then, by the proof of Lemma~\ref{lem:1}, $C$ is a PCR cycle. Otherwise, $C$ is a CCR cycle. This is why we sometimes refer to the PRR in \cite{Salasame} as a {\rm pure and complemented cycling register} (PCCR).
\end{remark}

If $\phi(\cdot)$ is the Euler totient function, then the number of cycles in $\Omega(f_{{\rm PRR},n})$ is
\begin{align}\label{eq:number}
\bar{Z}_n &= Z_{n - 1} + Z_{n - 1}^{*} \mbox{, where} \notag \\
Z_{n-1} &= {\frac{1}{n-1}} \left({\sum_{d|(n-1)}}\phi(d) \,2^{\frac{n-1}{d}}\right)
\mbox{ and }
Z_{n-1}^{*} = {\frac {1}{2(n-1)}}
\left(\sum_{\substack{d \mid (n-1)\\ d \text{ odd}}} \phi(d) \, 2^{\frac{n-1}{d}}\right)
\end{align}
are the respective number of cycles in $\Omega(f_{{\rm PCR},n-1})$ and in $\Omega(f_{{\rm CCR},n-1})$ \cite{Fred82}. A proof for the formula of $Z_n$ and a sketch of the proof for the formula of $Z_{n}^{*}$ were both due to Golomb~\cite{Golomb}. A more thorough discussion was supplied by Sloane in~\cite[Section~3]{Sloane02}.

We partition the cycles in $\Omega(f_{{\rm PRR},n})$ into two parts, namely the PCR cycles $\CP_1,\ldots,\CP_{Z_{n-1}}$ and the CCR cycles $\CC_1,\ldots,\CC_{Z_{n-1}^*}$. Excluding the cycle $(1^{n-1})$ whose representative is obviously $1^n$, the other cycle representatives must be $n$-stage states whose forms are either
\[
0,c_1,\ldots,c_{n-2},0\ \mbox{ or } \ 0,c_1,\ldots,c_{n-3},0,1,
\]
where $0,c_1,\ldots,c_{n-2}$ is either the necklace in a PCR cycle or the co-necklace in a CCR cycle.

The PRR has yet another interesting property. The {\it run-length encoding} of an $n$-stage state $\mathbf{v}=v_0,v_1,\ldots,v_{n-1}$ is a compressed representation that stores, consecutively, the lengths of the maximal runs of each element. The {\it run-length} of $\mathbf{v}$ is the length of its run-length encoding. For example, the state $0001011101$ has run-length encoding $311311$ and run-length 6. The following lemma was established in~\cite{Salasame}.

\begin{lemma}\label{lem:2}
All states in a cycle in $\Omega(f_{\rm PRR, n})$ have the same run-length for a given $n$.
\end{lemma}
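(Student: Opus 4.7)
The plan is to show that a single shift of the PRR preserves the run-length of a state, and then invoke this fact inductively along the cycle. Let $\mathbf{v} = v_0, v_1, \ldots, v_{n-1}$ be an arbitrary state of a cycle in $\Omega(f_{\rm PRR,n})$ and let its successor be
\[
\mathbf{v}' = v_1, v_2, \ldots, v_{n-1}, v_n, \mbox{ where } v_n = v_0 + v_1 + v_{n-1}
\]
by the PRR feedback function in Equation~(\ref{equ:1}). It suffices to prove that the run-length of $\mathbf{v}'$ equals the run-length of $\mathbf{v}$.

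To compare the two run-lengths, I would analyze separately what happens at the left end, where $v_0$ is dropped, and at the right end, where $v_n$ is appended. Deleting $v_0$ shortens the leftmost run by one if $v_0 = v_1$, leaving the run count unchanged, and removes an isolated leading run of length $1$ if $v_0 \neq v_1$, decreasing the run count by $1$. Symmetrically, appending $v_n$ extends the rightmost run by $1$ if $v_n = v_{n-1}$, leaving the count unchanged, and creates a new trailing run if $v_n \neq v_{n-1}$, increasing the count by $1$. Summing the two contributions yields four cases, and the run-length is preserved exactly when the conditions $v_0 = v_1$ and $v_n = v_{n-1}$ hold simultaneously, or when both fail simultaneously.

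The key algebraic observation is that these two conditions are actually equivalent for any PRR state. Indeed, the feedback relation gives
\[
v_n + v_{n-1} = (v_0 + v_1 + v_{n-1}) + v_{n-1} = v_0 + v_1
\]
in $\mathbb{F}_2$. Thus $v_n = v_{n-1}$ iff $v_n + v_{n-1} = 0$ iff $v_0 + v_1 = 0$ iff $v_0 = v_1$. Only the two "balanced" cases in the analysis above occur, so the run-lengths of $\mathbf{v}$ and $\mathbf{v}'$ coincide.

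A trivial induction on the number of shifts then shows that every state of the cycle containing $\mathbf{v}$ has the same run-length as $\mathbf{v}$, which is the claim. The only mildly delicate point is setting up the boundary bookkeeping correctly in the case analysis (in particular, the degenerate cycles $(0^n)$ and $(1^n)$ are consistent with the formula, since there both $v_0 = v_1$ and $v_n = v_{n-1}$ hold trivially); no genuine obstacle arises.
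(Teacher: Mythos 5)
Your argument is correct and complete. Note that the paper does not actually prove Lemma~\ref{lem:2}: it is quoted as established in the cited manuscript of Sala, Sawada and Alhakim \cite{Salasame}, so there is no in-paper proof to compare against. Your single-shift invariance argument is a clean, self-contained justification: dropping $v_0$ changes the run count by $0$ or $-1$ according to whether $v_0 = v_1$, appending $v_n$ changes it by $0$ or $+1$ according to whether $v_n = v_{n-1}$, and the PRR relation gives $v_n + v_{n-1} = v_0 + v_1$ over $\mathbb{F}_2$, so the two boundary effects always cancel; induction along the cycle finishes the proof. The case analysis is sound for every $n \geq 3$ (indeed for $n \geq 2$), and the degenerate cycles need no special handling, as you observe. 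One remark: an equivalent route, more in the spirit of the paper's Lemma~\ref{lem:1}, is to use the fact that each PRR cycle of order $n$ is a PCR or CCR cycle of order $n-1$, so that either $v_{n-1} = v_0$ and $v_n = v_1$, or $v_{n-1} = \bar{v}_0$ and $v_n = \bar{v}_1$; in both cases $v_0 = v_1$ iff $v_{n-1} = v_n$, which is exactly your key identity. Your version, working directly from the feedback function $v_n = v_0 + v_1 + v_{n-1}$, handles both cases uniformly and is arguably the more economical argument.
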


\begin{example}\label{ex:PCCR}
Let $n=6$. The $12$ cycles in $\Omega(f_{\rm PRR})$ consists of the $8$ cycles generated by the PCR of order $5$, namely
\begin{align*}
&\CP_1:=(00000), ~\CP_2:=(00001), ~ \CP_3:=(00011), ~ \CP_4:=(00101), \\
&\CP_5:=(00111), ~ \CP_6:=(01011), ~ \CP_7:=(01111), ~ \CP_8:=(11111),
\end{align*}
and the $4$ cycles generated by the CCR of order $5$, namely
\[
\CC_1:=(00000 11111), ~\CC_2 :=(0001011101), ~ \CC_3:=(0010011011), ~ \CC_4:=(0101010101).
\]
The cycles are presented in increasing lexicographical order within their respective types as
\[
\CP_1 \prec_{\rm lex} \CP_2 \prec_{\rm lex} \ldots \prec_{\rm lex} \CP_8
\mbox{ and }
\CC_1 \prec_{\rm lex} \ldots \prec_{\rm lex} \CC_4.
\]
The cycle representatives of $\CP_1,\ldots,\CP_8$ and $\CC_1,\ldots,\CC_4$ are, in that order,
\begin{align*}
000000,~000010, ~000110,~001010,~001110,~010110,~011110,~111111,\\
000001, ~000101,~001001,~010101.
\end{align*}
In $\CC_2$, there are $10$ distinct $6$-stage states:
\[
000101, 001011, 010111, 101110, 011101, 111010, 110100, 101000, 010001, 100010,
\]
and their run-lengths are, respectively,
\[
3111, 2112, 1113, 1131, 1311, 3111,2112, 1113, 1131, 1311.
\]
All of them have the same run-length $4$.
\end{example}

Gabric et al. in~\cite{Gabric18} and Sawada et al. in~\cite{Sawada16} proposed several fast algorithms to generate de Bruijn sequences. In those papers they ordered the cycles in $\Omega(f_{\rm PCR})$ and in $\Omega(f_{\rm CCR})$ lexicographically according to how each cycle's necklace or co-necklace compares to one another, respectively. In each case, they replace the usual FSR-based generating algorithm by a well-chosen mechanism, called the {\it successor rule} $\rho(x_0,x_1,\ldots,x_{n-1})$, to generate the next bit. Given an FSR with a feedback function $f(x_0,x_1,\ldots,x_{n-1})$, the general thinking behind this approach is to determine some condition which guarantees that the resulting sequence is de Bruijn. 

Let $A$ be a nonempty subset of the set of all states. For each state $\mathbf{c}=c_0,c_1,\ldots,c_{n-1}$, a nontrivial successor rule $\rho$ is an assignment 
\begin{equation}\label{equ:rho1}
\rho(c_0,c_1,\ldots,c_{n-1}) :=
\begin{cases}
\overline{f(c_0,\ldots,c_{n-1})} & \mbox{if } \mathbf{c} \in A,\\
f(c_0,\ldots,c_{n-1}) & \mbox{if } \mathbf{c} \notin A.
\end{cases}
\end{equation}
To be precise, the successor of $\mathbf{c}=c_0,c_1,\ldots,c_{n-1}$ is $c_1,\ldots,c_{n-1}, f(c_0,\ldots,c_{n-1})$ except when $\mathbf{c} \in A$. When $\mathbf{c} \in A$, the successor is $c_1,\ldots,c_{n-1}, f(c_0,\ldots,c_{n-1})+1$, that is, the last bit of the successor is the complement of the last bit of the successor when $\mathbf{c} \notin A$. Our interest is to characterize a set $A$ whose corresponding successor rule ensures that the cycles can be joined to form a de Bruijn sequence. A framework to judge whether a successor rule can generate a de Bruijn sequence was provided in \cite[Section 3]{Gabric18}. We restate their result in our notation as follows.

\begin{theorem}[Theorem 3.5 in \cite{Gabric18}] \label{thm:old}
Let $\prec$ denote a valid order on the cycles of an FSR whose feedback function is $f(x_0,x_1,\ldots,x_{n-1})$. Let the cycles be ordered as $C_1 \prec C_2 \prec \ldots \prec C_r$. Let $A$ be a set that contains all states which constitute $r-1$ conjugate pairs with the following properties. For each cycle $C_i$ with $1 < i \leq r$, there exists a unique conjugate pair $(\mathbf{c}, \hat{\mathbf{c}})$ such that $\mathbf{c}$ is in $C_i$, $\hat{\mathbf{c}}$ is in $C_j$, where $j < i$, and both $\mathbf{c},\hat{\mathbf{c}} \in A$. The successor rule $\rho$ in Equation (\ref{equ:rho1}) generates a de Bruijn sequence.
\end{theorem}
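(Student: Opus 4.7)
The plan is to view this theorem as an instance of the cycle joining method executed on a carefully chosen spanning tree of the adjacency graph $G_f$. By the bijection from \cite{AEB87} quoted earlier in the preliminaries, it suffices to show that (i) the $r-1$ conjugate pairs singled out by $A$ form a spanning tree $T$ of $G_f$, and (ii) the successor rule $\rho$ simultaneously implements the CJM swap at each of these pairs. Together these two facts produce a single cycle of length $2^n$, which is the desired de Bruijn sequence.

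For (i), let $T$ be the subgraph of $G_f$ on vertex set $\{C_1,\ldots,C_r\}$ whose edges are the $r-1$ specified conjugate pairs. The hypothesis assigns to each $C_i$ with $i>1$ exactly one incident edge leading to a strictly earlier cycle $C_{j(i)}$, with $j(i) < i$. Viewing $C_{j(i)}$ as the parent of $C_i$ in a structure rooted at $C_1$, an easy induction on $i$ shows that the subgraph of $T$ induced on $\{C_1,\ldots,C_i\}$ is connected and has exactly $i-1$ edges. The constraint $j(i) < i$ is precisely what prevents any cycle from forming among the selected edges, so $T$ is indeed a spanning tree of $G_f$.

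For (ii), fix an arbitrary state $\mathbf{c} = c_0,\ldots,c_{n-1} \in A$; by construction its conjugate $\hat{\mathbf{c}} = \bar{c}_0,c_1,\ldots,c_{n-1}$ also lies in $A$. Nonsingularity of $f$ forces $f(x_0,\ldots,x_{n-1}) = x_0 + h(x_1,\ldots,x_{n-1})$ for some Boolean function $h$, and therefore $f(\hat{\mathbf{c}}) = f(\mathbf{c})+1$. Under $\rho$, the last bit produced from $\mathbf{c}$ becomes $\overline{f(\mathbf{c})} = f(\hat{\mathbf{c}})$, while the last bit produced from $\hat{\mathbf{c}}$ becomes $\overline{f(\hat{\mathbf{c}})} = f(\mathbf{c})$. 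Hence $\rho$ exchanges the successors of $\mathbf{c}$ and $\hat{\mathbf{c}}$; at every state outside $A$ it agrees with $f$. This is exactly the CJM join producing the feedback function $\hat{f}$ in Equation (\ref{eq:newfeedback}). Performing these $r-1$ joins in any order consistent with $T$ merges, by induction on the number of remaining components, all cycles in $\Omega(f)$ into a single cycle, finishing the proof. The one delicate point I anticipate is justifying that the $r-1$ swaps truly compose to a single global cycle merger rather than inadvertently splitting a previously merged cycle; this is handled by noting that each edge of $T$, at the moment its swap is performed, still connects two distinct components, a direct consequence of $T$ being acyclic.
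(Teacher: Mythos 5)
Your proposal is correct and follows essentially the same route the paper relies on: the paper does not reprove this result (it cites Theorem 3.5 of \cite{Gabric18}) but its surrounding discussion gives exactly your argument, namely that the $r-1$ conjugate pairs with the ``points to a smaller-indexed cycle'' condition form a rooted spanning tree of the adjacency graph and that $\rho$ swaps the successors at each pair, so the CJM merges all cycles into one. Your added details (nonsingularity giving $f(\hat{\mathbf{c}})=\overline{f(\mathbf{c})}$, disjointness of the pairs, and the acyclicity argument ensuring each swap merges two distinct components) correctly fill in the sketch.
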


The theorem states that, if one can define a suitable order on the cycles generated by a given FSR, then a successor rule can be devised to generate a de Bruijn sequence by constructing a spanning tree.

If in each cycle $C_i$ with $1 < i \leq r$ one can uniquely determine a state whose conjugate state is in another cycle $C_j \prec C_i$, then the successor rule interchanges their respective successor states. The successor of any other state which is not identified as either the unique state in each $C_i$ or its conjugate state in $C_j$ remains to be the one assigned by the FSR with feedback function $f$. This process joins cycles $C_i$ and $C_j$. All of the cycles are eventually joined into a single cycle as $i$ goes from $2$ to $r$.

The set $A$ in Theorem \ref{thm:old} determines $r-1$ conjugate pairs that correspond to $r-1$ edges in the adjacency graph $G$ that has $r$ vertices. 
If an edge connects two distinct cycles $C_i$ and $C_j$ with $r \geq i > j \geq 1$, then the direction of this edge is from $C_i$ to $C_j$. The condition guarantees that there is a unique path from each $C_i$ to $C_1$ for $1 < i \leq r$ and a rooted spanning tree is explicitly constructed. This ensures that the generated sequence is a de Bruijn sequence. Henceforth, we call such a set $A$ \textit{a critical set of spanning conjugate pairs} or \textit{a critical set}, in short.

The following result from \cite{Jansen91} is useful to tell when a given cycle is lexicographically less than another cycle.

\begin{lemma}\label{lem:Jansen} \cite{Jansen91}
If $C$ is a cycle of an FSR whose cycle representative is a nonzero state $\mathbf{v}$, then the companion state $\tilde{\mathbf{v}}$ is in another cycle $C'$ with
$C' \prec_{\rm lex} C$.
\end{lemma}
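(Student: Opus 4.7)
The plan is to split into cases on the last bit $v_{n-1}$ of the representative $\mathbf{v} = v_0, v_1, \ldots, v_{n-1}$, using only that $\mathbf{v}$ is lexicographically least in $C$ and that the FSR is nonsingular, so every state has a unique predecessor. In the easy case $v_{n-1} = 1$, the companion $\tilde{\mathbf{v}} = v_0, v_1, \ldots, v_{n-2}, 0$ agrees with $\mathbf{v}$ on the first $n-1$ positions and has last bit $0$, so $\tilde{\mathbf{v}} \prec_{\rm lex} \mathbf{v}$. Since $\mathbf{v}$ is the lex-least state of $C$, this forces $\tilde{\mathbf{v}} \notin C$; the cycle $C'$ containing $\tilde{\mathbf{v}}$ has a representative at most $\tilde{\mathbf{v}}$, hence $C' \prec_{\rm lex} C$.

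For the case $v_{n-1} = 0$, I would first rule out $v_0 = 1$: lex-minimality would force every state of $C$ to begin with $1$, but applying the shift repeatedly to $\mathbf{v}$ produces a state of $C$ whose first bit is $v_{n-1} = 0$, a contradiction. So $v_0 = v_{n-1} = 0$. Writing the nonsingular feedback as $f = x_0 + h(x_1, \ldots, x_{n-1})$, the unique predecessor of any state $w_0, w_1, \ldots, w_{n-1}$ has first bit $w_{n-1} + h(w_0, \ldots, w_{n-2})$. Therefore the predecessors of $\mathbf{v}$ and $\tilde{\mathbf{v}}$ form a conjugate pair sharing the suffix $v_0, \ldots, v_{n-2}$. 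Let $y$ denote the first bit of the predecessor of $\mathbf{v}$; this predecessor lies in $C$, so $(y, 0, v_1, \ldots, v_{n-2}) \succeq_{\rm lex} \mathbf{v}$. A bit-by-bit comparison shows that $y = 0$ would force $v_1 = v_2 = \cdots = v_{n-2} = 0$, contradicting $\mathbf{v} \neq \mathbf{0}$. Hence $y = 1$, and the predecessor of $\tilde{\mathbf{v}}$ is $0, 0, v_1, \ldots, v_{n-2}$. Comparing this with $\mathbf{v} = 0, v_1, \ldots, v_{n-2}, 0$ at the smallest index $k \geq 1$ with $v_k = 1$ (which exists because $\mathbf{v}$ is nonzero and $v_0 = v_{n-1} = 0$) exhibits a state of $C'$ strictly less than $\mathbf{v}$, forcing $\tilde{\mathbf{v}} \notin C$ and $C' \prec_{\rm lex} C$.

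The main obstacle is the subcase $v_{n-1} = 0$, where $\tilde{\mathbf{v}}$ itself is lex-greater than $\mathbf{v}$ and therefore cannot witness the desired inequality on its own. The key trick is to step back to the predecessor of $\tilde{\mathbf{v}}$: nonsingularity makes this predecessor uniquely determined and its first bit explicitly computable in terms of $h$, and lex-minimality of $\mathbf{v}$ then pins down that first bit to be $1$ for $\mathbf{v}$ and $0$ for $\tilde{\mathbf{v}}$, which is exactly what makes the lex comparison go through.
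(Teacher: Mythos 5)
Your proof is correct. There is, however, nothing in the paper to compare it against: the lemma is stated with a citation to Jansen, Franx and Boekee \cite{Jansen91} and no proof is reproduced, so what you have written is a self-contained argument, and it works in the full generality needed here (any nonsingular FSR, representative = lexicographically least state of the cycle). The case $v_{n-1}=1$ is immediate, as you say, since $\tilde{\mathbf{v}}\prec_{\rm lex}\mathbf{v}$ already forces $\tilde{\mathbf{v}}\notin C$ and pushes the representative of $C'$ below $\mathbf{v}$. The real content is your case $v_{n-1}=0$, and each step checks out: lex-minimality gives $v_0=0$ (some state of $C$ starts with $v_{n-1}=0$); nonsingularity gives the unique predecessor of $w_0,\ldots,w_{n-1}$ with first bit $w_{n-1}+h(w_0,\ldots,w_{n-2})$, so the predecessors of $\mathbf{v}$ and $\tilde{\mathbf{v}}$ share the suffix $v_0,\ldots,v_{n-2}$ and have complementary first bits; the bit-by-bit comparison of $y,0,v_1,\ldots,v_{n-2}$ against $\mathbf{v}$ indeed forces $v_1=\cdots=v_{n-2}=0$ when $y=0$, contradicting $\mathbf{v}\neq\mathbf{0}$, hence $y=1$; and the predecessor $0,0,v_1,\ldots,v_{n-2}$ of $\tilde{\mathbf{v}}$ is strictly smaller than $\mathbf{v}$ at the first index $k\geq 1$ with $v_k=1$, which (since cycles are closed under taking predecessors) simultaneously yields $\tilde{\mathbf{v}}\notin C$ and a state of $C'$ below $\mathbf{v}$, so $C'\prec_{\rm lex}C$. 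The key idea of stepping back to the predecessor in the awkward case, rather than comparing $\tilde{\mathbf{v}}$ itself, is exactly what makes the lemma go through and is in the same spirit as the original argument in \cite{Jansen91}; your version has the merit of isolating precisely the two ingredients used, nonsingularity and lex-minimality.
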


Sala used the PRR of order $n$ to construct a successor rule that generates a de Bruijn sequence in $O(n)$ time and $O(n)$ space per bit in \cite{SalaTh}. The original proof was quite involved. Here we recall the result and supply a simpler proof.

\begin{theorem}\label{thm:sala}\cite{SalaTh}
For each state $\mathbf{c}=c_0,c_1,\ldots,c_{n-1}$, let 
${\bf u}_\mathbf{c}:=c_1,c_2,\ldots,c_{n-1}$. The successor rule
\begin{equation}\label{equ:sala}
\varsigma(\mathbf{c})=
\begin{cases}
\overline{c_0+c_1+c_{n-1}} & \mbox{if } {\bf u}_{\mathbf{c}} \mbox{ is a necklace or a co-necklace},\\
c_0+c_1+c_{n-1} & \mbox{otherwise,}
\end{cases}
\end{equation}
generates a de Bruijn sequence of order $n$.
\end{theorem}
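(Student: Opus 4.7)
The plan is to apply the framework of Theorem~\ref{thm:old}: take $A$ to be the set of all states $\mathbf{c}$ for which $\mathbf{u}_\mathbf{c}$ is a necklace or a co-necklace of order $n-1$, order the cycles of $\Omega(f_{\rm PRR,n})$ lexicographically by their $n$-stage PRR representatives, and exhibit, for each non-smallest cycle, a unique pair in $A$ joining it to a strictly smaller cycle. Since $\mathbf{u}_\mathbf{c}=\mathbf{u}_{\hat{\mathbf{c}}}$, the set $A$ is automatically a disjoint union of conjugate pairs, and the remark following Lemma~\ref{lem:1} shows every such pair joins a PCR cycle to a CCR cycle of the PRR.

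For each non-smallest cycle $C$ the canonical downward pair is the one built from $C$'s own $(n-1)$-representative. If $C$ is a PCR cycle with underlying necklace $\mathbf{u}_P$, the pair $\{(0,\mathbf{u}_P),(1,\mathbf{u}_P)\}$ joins $C$ to the CCR cycle whose co-necklace $\mathbf{u}^*$ is strictly lex-less than $\mathbf{u}_P$ (by the overlap lemma stated below); a direct comparison using the canonical representative forms $(0,c_1,\ldots,c_{n-2},0)$ and $(0,c_1,\ldots,c_{n-3},0,1)$ recalled in the preliminaries, together with the trivial case when $C$ is the all-ones PCR cycle with representative $1^n$, shows the partner's PRR representative precedes $C$'s. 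An analogous argument using the co-necklace of each CCR cycle provides a symmetric downward pair. Any other pair in $A$ touching $C$ comes from an $(n-1)$-string that is a non-trivial rotation of $\mathbf{u}_P$, or another $(n-1)$-state in the CCR cycle of $C$; in either case the relevant string is strictly lex-larger than the canonical one, so the same comparison makes the edge upward. This pins down exactly one downward pair for each non-smallest cycle, as Theorem~\ref{thm:old} demands.

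The enabling fact is the \emph{overlap lemma}: the only $(n-1)$-tuple that is simultaneously a PCR necklace and a CCR co-necklace of order $n-1$ is $0^{n-1}$. Once this is granted, $|A|/2=Z_{n-1}+Z_{n-1}^*-1=\bar{Z}_n-1$, matching the critical-set size demanded by Theorem~\ref{thm:old}, and the theorem yields that $\varsigma$ generates a de Bruijn sequence. I expect this lemma to be the main obstacle; my proof would go as follows. The tuple $1^{n-1}$ is ruled out directly because the CCR cycle $(0^{n-1}1^{n-1})$ has co-necklace $0^{n-1}\ne 1^{n-1}$. For a necklace $\mathbf{u}\ne 0^{n-1},1^{n-1}$ with first $1$ at position $a\ge 1$, I would first show $\mathbf{u}$ has no trailing zeros: if $b\ge 1$ is the trailing-zero length, the cyclic rotation that places the block $0^b$ at the front has at least $a+b$ leading zeros, is lex-smaller than $\mathbf{u}$, and contradicts the necklace property. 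Consequently $\mathbf{u}$ ends in a $1$-run of length $d\ge 1$, so $\bar{\mathbf{u}}$ ends in $0^d$; the $(n-1)$-window of the CCR cycle beginning at the first bit of this $0^d$ block wraps around into the leading $0^a$ of $\mathbf{u}$ and thus has $d+a$ leading zeros, making it strictly lex-smaller than $\mathbf{u}$. Hence $\mathbf{u}$ cannot be a co-necklace, completing the overlap lemma and the proof.
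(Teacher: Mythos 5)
Your argument is correct, but it follows a genuinely different route from the paper's. The paper's proof is very short: it notes that ``$\mathbf{u}_{\mathbf{c}}$ is a necklace or co-necklace'' is equivalent to the PRR-successor state $\mathbf{v}=c_1,\ldots,c_{n-1},c_0+c_1+c_{n-1}$ being the $n$-stage cycle representative of the current cycle (so each nonzero cycle singles out exactly one conjugate pair, and uniqueness is automatic), and then invokes Lemma~\ref{lem:Jansen} to place the companion $\tilde{\mathbf{v}}$ in a lexicographically smaller cycle, after which Theorem~\ref{thm:old} finishes. You avoid Lemma~\ref{lem:Jansen} entirely: you index the flipped pairs $\{0\mathbf{u},1\mathbf{u}\}$ by the necklaces and co-necklaces $\mathbf{u}$ of order $n-1$, prove the overlap lemma (only $0^{n-1}$ is both, via the trailing-zero and wrap-around comparisons, which are correct), count exactly $\bar{Z}_n-1$ pairs, and orient every edge by hand using the bipartite fact that each pair joins a PCR cycle to a CCR cycle and by comparing $(n-1)$-stage representatives, with the forms $0,c_1,\ldots,c_{n-2},0$ versus $0,c_1,\ldots,c_{n-3},0,1$ breaking the single tie at $0^{n-1}$. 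This is closer in spirit to the PCR/CCR analysis the paper performs later for Theorems~\ref{thm:rho-1} and~\ref{thm:rho-2} than to its own proof of Theorem~\ref{thm:sala}. The paper's route is shorter and reuses a lemma that also drives the later constructions; yours is self-contained, makes the spanning-tree structure and the exact number of flipped pairs explicit, at the cost of more case checking. Two steps you should spell out when writing it up: the inequalities ``co-necklace of the partner $\preceq\mathbf{u}_P$'' and ``necklace of the partner $\preceq\mathbf{u}^*$'' rest on the one-line observation that $\mathbf{u}_P$ (resp.\ $\mathbf{u}^*$) is itself an $(n-1)$-stage state of the partner cycle, being the length-$(n-1)$ suffix of the conjugate state $0\mathbf{u}_P$ (resp.\ $0\mathbf{u}^*$); and in the uniqueness step ``non-trivial rotation'' should be read as ``a rotation that differs as a string,'' so that strict lexicographic dominance of the canonical representative applies even when the cycle's least period is smaller than $n-1$.
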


\begin{proof}
Let the cycles in $\Omega(f_{{\rm PRR},n})$ be ordered as $(0^n)=C_0 \prec_{\rm lex} C_1 \prec_{\rm lex} \ldots \prec_{\rm lex} C_{\bar{Z}_n-1}$. For a given state 
$\mathbf{c}$ in some $C_i$ with $i > 0$, let $c_{n} := c_0+c_1+c_{n-1}$ and $\mathbf{v}:=c_1,c_2,\ldots,c_{n}$. As we have discussed earlier, the $(n-1)$-stage state ${\bf u}_{\mathbf{c}}$ is a necklace or a co-necklace if and only if the $n$-stage state $\mathbf{v}$ is the cycle representative of $C_i$, which is unique. By Lemma~\ref{lem:Jansen}, the companion state $\tilde{\mathbf{v}}$ must be in $C_j \neq C_i$ with $C_j \prec_{\rm lex} C_i$, that is, $j < i$. Collecting the two states in the conjugate pair $({\bf c},\hat{\bf c})$ as $i$ ranges from $1$ to $\bar{Z}_n-1$ yields a critical set $A$. The desired conclusion follows. 
\end{proof}

Now that prior arts have been covered, we are ready to present our new successor rules. Each of the next two sections introduces a new generic class of successor rules. We use Theorem~\ref{thm:old} to confirm that these two classes generate exponentially many de Bruijn sequences.

\section{The first class of successor rules from the PRR}\label{sec:class1}

We begin by giving a general formula for successor rules to generate de Bruijn sequences based on the PRR before defining a new class of successor rules explicitly.

\begin{theorem}\label{thm:rho-1}
For each state ${\bf c}=c_0,c_1,\ldots,c_{n-1}$ produced by the $PRR$ of order $n \geq 3$, we define the state $\mathbf{v}_{\mathbf{c}} := c_1,\ldots,c_{n-1},1$. The states forming the conjugate pair $({\bf c}, \hat{{\bf c}})$ belong to a critical set $A_{\Psi}$ if one of the followings holds.
\begin{enumerate}
	\item[(i)] The state $\mathbf{v}_{\mathbf{c}}$ is the cycle representative of a CCR cycle $\CC$ if $c_1=0$.
	\item[(ii)] The state $\mathbf{v}_{\mathbf{c}}$ can be uniquely determined in a PCR cycle $\CP$ if $c_1=1$.
\end{enumerate}
The successor rule
\begin{equation}\label{sr-1}
\Psi({\bf c}) :=
\begin{cases}
\overline{c_{0} + c_{1} + c_{n - 1} }  &  \mbox{if }  {\bf c} \in A_{\Psi},\\
c_{0} + c_{1} + c_{n - 1}     &  \mbox{otherwise,}
\end{cases}
\end{equation}
generates a de Bruijn sequence of order $n$.
\end{theorem}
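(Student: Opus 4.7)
The plan is to apply Theorem~\ref{thm:old} with the lexicographic order $\prec_{\rm lex}$ on the cycles of $\Omega(f_{\rm PRR,\it n})$, taking $\CP_1=(0^n)$ as the root. First, I would count the conjugate pairs in $A_\Psi$: condition~(i) yields exactly one pair per CCR cycle $\CC$, namely the PRR-predecessor of its unique cycle representative $\mathbf{v}_\CC$ (which forces $c_1=0$ because $\mathbf{v}_\CC$ starts with $0$), and condition~(ii) yields exactly one pair per non-trivial PCR cycle $\CP$ via the uniquely designated state $\mathbf{v}_{\mathbf{c}}\in\CP$ starting and ending with $1$. Since $\CP_1$ contributes no pair, the total is $Z_{n-1}^*+(Z_{n-1}-1)=\bar{Z}_n-1$, matching Theorem~\ref{thm:old}'s requirement. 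A direct inspection of $c_0$ against $c_{n-1}$ shows that in every pair $(\mathbf{c},\hat{\mathbf{c}})$ the two states lie in cycles of opposite types, since conjugating flips whether the first and last bits of $\mathbf{c}$ agree or disagree.

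The core step is then to verify, for each non-root cycle, that the paired $\hat{\mathbf{c}}$ lies in a lex-strictly-smaller cycle. Case~(i) is immediate from Lemma~\ref{lem:Jansen}: since $\mathbf{v}_{\mathbf{c}}$ is the representative of $\CC$ and $\hat{\mathbf{c}}$ is the PRR-predecessor of the companion $\tilde{\mathbf{v}}_{\mathbf{c}}$, both $\hat{\mathbf{c}}$ and $\tilde{\mathbf{v}}_{\mathbf{c}}$ lie in a common cycle, which the lemma places strictly below $\CC$.

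Case~(ii) is the main obstacle because $\mathbf{v}_{\mathbf{c}}$ is in general not the representative of $\CP$, so Lemma~\ref{lem:Jansen} does not apply directly. The plan is a direct lex comparison of the cycle representatives. Writing $\mathbf{x}=1,d_2,\ldots,d_{n-1}$ for the $(n-1)$-stage string underlying $\mathbf{v}_{\mathbf{c}}$, the CCR sequence generating the cycle $\CC$ of $\hat{\mathbf{c}}$ is $\mathbf{x}\bar{\mathbf{x}}$ of period $2(n-1)$, while the PCR necklace of $\CP$ is the lex-least cyclic shift of $\mathbf{x}$, starting at some position $j^*\geq 1$ because $x_0=1\neq 0$. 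The $(n-1)$-stage CCR substring at position $j^*$ agrees with the PCR necklace on its first $n-j^*$ bits and then diverges by wrapping into $\bar{\mathbf{x}}$ rather than back into $\mathbf{x}$; the first differing bit is $\bar{x}_0=0<1=x_0$, so this CCR substring, and a fortiori the co-necklace of $\CC$, is lex-strictly less than the PCR necklace. Because the PRR representatives of $\CC$ and $\CP$ each consist of the $(n-1)$-stage co-necklace or necklace followed by a single extra bit determined by the cycle type, and because the first differing position between co-necklace and necklace lies in $\{1,\ldots,n-2\}$, this $(n-1)$-stage strict inequality lifts immediately to a strict inequality at the same position in the $n$-stage comparison, yielding $\CC\prec_{\rm lex}\CP$. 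The one-state degenerate case $\CP=(1^{n-1})$ with PRR representative $1^n$ is dispatched by direct inspection, placing $\hat{\mathbf{c}}$ in the CCR cycle whose representative $0^{n-1}1$ is lex less than $1^n$. With both cases verified, Theorem~\ref{thm:old} delivers the conclusion that $\Psi$ generates a de Bruijn sequence of order $n$.
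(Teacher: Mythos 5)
Your proposal is correct and takes essentially the same route as the paper's proof: you invoke Theorem~\ref{thm:old} with the lexicographic order on the PRR cycles, dispatch the CCR-representative case via Lemma~\ref{lem:Jansen}, handle the PCR case by comparing at the position where the necklace begins (agreement until the window wraps into the complemented half, then $\bar{x}_0=0<1=x_0$), and treat $(1^{n-1})$ separately, exactly as the paper does with its state $c_j,\ldots,c_{n-1},0,\bar{c}_2,\ldots,\bar{c}_j$ versus $c_j,\ldots,c_{n-1},1,c_2,\ldots,c_j$. The only cosmetic differences are that you compare $(n-1)$-stage necklaces/co-necklaces and then lift to the $n$-stage cycle representatives (the paper compares $n$-stage states directly), plus a harmless off-by-one (agreement holds on the first $n-1-j^*$ bits, not $n-j^*$), neither of which affects correctness.
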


\begin{proof}
Suppose that the cycles in $\Omega(f_{\rm PRR})$ have been ordered lexicographically as
\[
(0^{n-1}) = C_0 \prec_{\rm lex} C_1 \prec_{\rm lex} \ldots \prec_{\rm lex} C_{\bar{Z}_n-1}.
\]
The state $\mathbf{v}_{\mathbf{c}}$ that satisfies either one of the above conditions, depending on the value of $c_1$, determines a conjugate pair
$({\bf c}, \hat{{\bf c}})$. For $ 1 \leq i < \bar{Z}_n$, Theorem \ref{thm:old} requires us to show that each $C_i$ has a uniquely identified state whose conjugate state is in $C_j$, with $C_j \prec_{\rm lex} C_i$. This is equivalent to showing that each $C_i$ has a uniquely identified state whose companion state is in $C_j$, with $C_j \prec_{\rm lex} C_i$. We note that each $C_i$ must contain at least one state whose last bit is $1$, that is, $C_i$ always contains a state that can serve as $\mathbf{v}_{\mathbf{c}}$.
	
If $\mathbf{v}_{\mathbf{c}}$ is the cycle representative of a CCR cycle $\CC$, then it is uniquely determined and its companion state $\tilde{\mathbf{v}}_{\mathbf{c}}$ must be in a PCR cycle $\CP$. By Lemma~\ref{lem:Jansen}, we have $\CP \prec_{\rm lex} \CC$.

Let $\mathbf{v}_{\mathbf{c}}$ be a uniquely determined state in a PCR cycle $\CP$. If $\CP = (1^{n-1})$, then $\mathbf{v}_{\mathbf{c}} = 1^{n}$ and $\tilde{\mathbf{v}}_{\mathbf{c}} = 1^{n-1}0$ is in the CCR cycle $(0^{n-1}1^{n-1}) \prec_{\rm lex} (1^{n-1}) = \CP$. If $\CP \neq (1^{n-1})$, then the unique cycle representative of $\CP$ begins with a $0$ and has the form
\[
c_j,\ldots,c_{n-1},1, c_2\ldots,c_j
\]
for some $j$ in the range $2 \leq j < n$. The state $\tilde{\mathbf{v}}_{\mathbf{c}}$ is in a CCR cycle $\CC$ that contains the state
\[
c_j,\ldots,c_{n-1},0,\bar{c}_2,\ldots,\bar{c}_j,
\]
which is clearly lexicographically less than the cycle representative of $\CP$. Hence,
$\CC \prec_{\rm lex} \CP$. Thus, $\Psi$ generates a de Bruijn sequence of order $n$ by Theorem~\ref{thm:old}.
\end{proof}

We emphasize that the uniquely identified state in each CCR cycle must be the cycle representative. The uniquely determined state in each nonzero PCR cycle can be any state $\mathbf{v}_{\mathbf{c}}$ that ends with a $1$ as long as there is a way to uniquely identify it. Different ways to uniquely identify $c_1=1,c_2,\ldots,c_{n-1},1$ in each PCR cycle yield distinct successor rules. Each such rule generates a de Bruijn sequence. The rest of this section supplies two clusters of successor rules based on concrete choices for the critical set $A_{\Psi}$.

Since $c_1=1$, one strategy is to uniquely identify the $(n-1)$-stage state $c_2,\ldots,c_{n-1},1$ in any nonzero cycle produced by the PCR of order $n-1$ with respect to its necklace. Every state in a PCR cycle can be transformed into the necklace by repeated left shift operations. Now we define a new operator $\Lambda$ which consists of a sequence of left shift operations so that the first $1$ is cyclically left-shifted to the end. Formally, given a nonzero state ${\bf u}_{\mathbf{c}} =c_1, c_2,\ldots,c_{n-1}$, we choose $i$ with $1 \leq i <n$ to be the smallest index for which $c_i = 1$. We define
\[
\Lambda \,\mathbf{u}_{\mathbf{c}} :=c_{i+1},\ldots,c_{n-1},c_1,\ldots,c_{i}
\]
and use the notation $\Lambda^r\, \mathbf{u}_{\mathbf{c}}$ to mean $\Lambda^{r-1} (\Lambda \, \mathbf{u}_{\mathbf{c}})$, with $\Lambda^0 \, \mathbf{u}_{\mathbf{c}} := \mathbf{u}_{\mathbf{c}}$. In a PCR cycle, $\Lambda$ transforms $\mathbf{u}_{\mathbf{c}}$ to the necklace after at most $j$ steps, where $j$ is the weight of the cycle. In $(01011)$, for example, if $\mathbf{u}_{\mathbf{c}} = 01101$, then $\Lambda \, \mathbf{u}_{\mathbf{c}} = 10101$ and $\Lambda^2 \, \mathbf{u}_{\mathbf{c}} = 01011$ is already the necklace.

All ingredients to explicitly construct successor rules in the class $\Psi$ are now in place and distinct ways to determine the desired state in any PRR cycle can be explicitly written.

\begin{theorem}\label{prop:1}
Let positive integers $n$, $t$, and $k_1, \ldots, k_t$ be such that
\[
n \geq 3, \quad 2 \leq t \leq n-1, \quad 1 = k_1 < k_2 < \ldots < k_{t} =n, \quad \mbox{and } k_{t-1} < n-1.
\]
For each state $\mathbf{c}=c_0,c_1,\ldots,c_{n-1}$, let
$\mathbf{u}_{\mathbf{c}} := c_1 ,c_2,\ldots,c_{n-1}$. The successor rule
\begin{equation}\label{new3}
\Psi_1 (\mathbf{c}) =
\begin{cases}
\overline{c_{0} + c_{1} + c_{n - 1}} & \mbox{if } c_1=0 \mbox{ and } \mathbf{u}_{{\mathbf{c}}}  
\mbox{ is a co-necklace}, \\
\overline{c_{0} + c_{1} + c_{n - 1} } & \mbox{if } c_1=1 \mbox{ and there is an } i \mbox{ such that } \\
& \quad k_i \leq {\rm wt}(\mathbf{u}_{{\mathbf{c}}}) < k_{i+1} \mbox{ and } \Lambda^{k_i} \, \mathbf{u}_{\mathbf{c}} \mbox{ is a necklace,} \\
c_{0} + c_{1} + c_{n - 1}     & \mbox{otherwise},
\end{cases}
\end{equation}
generates de Bruijn sequences of order $n$.
\end{theorem}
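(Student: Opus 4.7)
The plan is to reduce Theorem~\ref{prop:1} to Theorem~\ref{thm:rho-1} by recognizing the set of states on which $\Psi_1$ flips the feedback as a critical set $A_{\Psi_1}$ whose defining conditions match items~(i) and~(ii) of Theorem~\ref{thm:rho-1}. Condition~(i) is straightforward. From the characterization of cycle representatives given immediately after Lemma~\ref{lem:1}, every nonzero CCR cycle in $\Omega(f_{\rm PRR})$ has its representative of the form $0,c_2,\ldots,c_{n-3},0,1$, whose length-$(n-1)$ prefix is precisely the co-necklace of the underlying CCR cycle of order $n-1$. Hence, whenever $c_1=0$ and $\mathbf{u}_{\mathbf{c}}$ is a co-necklace, one automatically has $c_{n-1}=0$, and $\mathbf{v}_{\mathbf{c}}=0,c_2,\ldots,c_{n-2},0,1$ is the representative of some CCR cycle, verifying~(i).

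The substantive work lies in condition~(ii). I must show that for every nonzero PCR cycle $\CP$ the second branch of $\Psi_1$ pins down exactly one state. The hypotheses $1=k_1<k_2<\cdots<k_t=n$ partition $\{1,\ldots,n-1\}$ into $[k_1,k_2),[k_2,k_3),\ldots,[k_{t-1},k_t)$, so the weight $w={\rm wt}(\mathbf{u}_{\mathbf{c}})$ forces a unique $i$. The crux, and the main obstacle of the proof, is to show that once $i$ is fixed the equation $\Lambda^{k_i}\mathbf{u}_{\mathbf{c}}=(\text{necklace of }\CP)$ has exactly one solution among the states of $\CP$ whose first bit is $1$. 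Once this is done, the corresponding $\mathbf{v}_{\mathbf{c}}=1,c_2,\ldots,c_{n-1},1$ is an $n$-stage state of $\CP$ (its first and last bits agree, as required by the PCR property) and Theorem~\ref{thm:rho-1} completes the argument.

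To resolve this crux I would assemble two observations. First, the necklace of any nonzero binary cyclic sequence ends in $1$: otherwise the one-position right rotation would be lexicographically strictly smaller. Second, let $p_1<p_2<\cdots$ enumerate cyclically the positions of the $1$s in the sequence underlying $\CP$. A state of $\CP$ starts with $1$ precisely when its starting position is some $p_r$, and ends with $1$ precisely when its starting position is some $p_s+1$. A direct check from the definition of $\Lambda$ shows that the state at position $p_r$ is sent by $\Lambda$ to the state at position $p_r+1$, and subsequent applications send $p_s+1$ to $p_{s+1}+1$. Hence $\Lambda^{k_i}$ is a bijection from the states of $\CP$ starting with $1$ onto those ending with $1$; the necklace, lying in the codomain, therefore has a unique preimage. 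This identifies a unique $\mathbf{u}_{\mathbf{c}}$ and hence a unique $\mathbf{v}_{\mathbf{c}}$, settling~(ii). The degenerate cases $\CP=(0^{n-1})$ (which is $C_0$ and needs no conjugate pair) and $\CP=(1^{n-1})$ (a single state on which $\Lambda$ acts as the identity) require only a brief sanity check, and Theorem~\ref{thm:rho-1} then yields that $\Psi_1$ generates a de Bruijn sequence of order $n$.
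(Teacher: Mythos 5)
Your proposal is correct and follows essentially the same route as the paper: it reduces Theorem~\ref{prop:1} to Theorem~\ref{thm:rho-1} by verifying that the two flipping branches single out exactly the cycle representative of each CCR cycle and a uniquely determined state ending in $1$ in each nonzero PCR cycle. The only difference is that you spell out the uniqueness in condition~(ii) explicitly (the necklace ends in $1$ and $\Lambda^{k_i}$ bijects the states of $\CP$ starting with $1$ onto those ending with $1$), a detail the paper asserts tersely using the same underlying behaviour of $\Lambda$.
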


\begin{proof}
We prove correctness by showing that the conditions required by Theorem \ref{thm:rho-1} are satisfied. Let ${\bf v}_{\mathbf{c}}:={\bf u}_{\mathbf{c}},1=c_1,c_2,\ldots,c_{n-1},1$ and ${\bf x}_{\mathbf{c}}:=0,{\bf u}_{\mathbf{c}} = 0,c_1,c_2,\ldots,c_{n-1}$. 
If $c_1=0$, then ${\bf v}_{\mathbf{c}}$ is the cycle representative of a CCR cycle $\CC$ of order $n-1$ if and only if ${\bf u}_{\mathbf{c}} = 0,c_2,\ldots,c_{n-1}$ is the co-necklace of $\CC$. If $c_1=1$, then ${\bf v}_{\mathbf{c}}$ is a uniquely determined state in a PCR cycle $\CP$ of order $n-1$ if and only if $\Lambda \, {\bf u}_{\mathbf{c}} = 
L {\bf u}_{\mathbf{c}} = c_2,\ldots,c_{n-1},1$ is uniquely determined in $\CP$. Since $k_i \leq {\rm wt}({\bf u}_{\mathbf{c}}) < k_{i+1}$ and $\Lambda^{k_i}\, {\bf u}_{\mathbf{c}}$ is the necklace of $\CP$, we confirm that ${\bf u}_{\mathbf{c}}$ is uniquely determined.
\end{proof}

Because the weight of the PCR cycle $(1^{n-1})$ is $n-1$, we always take $k_t=n$ for consistency in formulating the successor rule. Letting $k_{t-1}=n-1$ does not add any value since $(1^{n-1})$ has just one state. The number of inequivalent de Bruijn sequences that Theorem \ref{prop:1} generates is established as the next result.

\begin{proposition}\label{lem:3}
Taking all valid parameters $t$ and $\{k_1,k_2,\ldots,k_t\}$ in the statement of Theorem~\ref{prop:1}, the number of de Bruijn sequences generated by $\Psi_1$ in Equation
(\ref{new3}) is $2^{n-3}$.
\end{proposition}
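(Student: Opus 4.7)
The plan is twofold: count the admissible parameter tuples, then show that distinct tuples yield inequivalent de Bruijn sequences.

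For the count, the constraints $k_1 = 1$, $k_t = n$, $k_{t-1} < n-1$, and $1 = k_1 < k_2 < \cdots < k_t = n$ leave $\{k_2, \ldots, k_{t-1}\}$ as a freely chosen subset of $\{2, 3, \ldots, n-2\}$ (with the empty subset corresponding to $t = 2$). Hence there are exactly $2^{n-3}$ admissible tuples.

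For injectivity, I invoke the bijection from \cite{AEB87} between spanning trees of the adjacency graph $G_{f_{\rm PRR}}$ and inequivalent CJM-constructible de Bruijn sequences, reducing the goal to showing that distinct parameter tuples give distinct critical sets $A_{\Psi_1}$. Given two distinct tuples, viewed as subsets $S \neq S'$ of $\{2, \ldots, n-2\}$, I pick some $k^* \in S \setminus S'$ (possibly after swapping $S$ and $S'$) and focus on the PCR cycle $\CP^*$ of order $n-1$ whose necklace is $N := 0^{n-1-k^*} 1^{k^*}$. This necklace is aperiodic because its single maximal $0$-run adjacent to its single maximal $1$-run precludes any shorter period. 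Consequently $\CP^*$ contains $n-1$ distinct states; its $k^*$ states ending in $1$ form a single $\Lambda$-orbit of length $k^*$, and $\Lambda$ restricts to a bijection from the $k^*$ states of $\CP^*$ starting with $1$ to those ending in $1$.

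On $\CP^*$ with shift value $s \in \{1, \ldots, k^*\}$ assigned by the tuple at weight $k^*$, the rule $\Psi_1$ selects the unique state $\mathbf{u}_{\mathbf{c}}$ starting with $1$ satisfying $\Lambda^s \mathbf{u}_{\mathbf{c}} = N$. Labelling the orbit of states ending in $1$ as $e_0 = N, e_1, \ldots, e_{k^*-1}$ with $\Lambda(e_r) = e_{(r+1) \bmod k^*}$ and setting $v_r := \Lambda^{-1}(e_r)$ via the bijection above, a direct computation gives $\Lambda^s(v_r) = e_{(r+s-1) \bmod k^*}$; hence the selection is $v_{(1-s) \bmod k^*}$ and distinct $s \in \{1, \ldots, k^*\}$ produce distinct selected states. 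The shift assigned by $S$ at $j = k^*$ is $k^*$ itself, whereas the shift assigned by $S'$ is the largest element of $(S' \cup \{1\}) \cap \{1, \ldots, k^*-1\}$, which is strictly less than $k^*$. Therefore $A_S$ and $A_{S'}$ disagree on $\CP^*$, the corresponding spanning trees differ, and the two de Bruijn sequences are inequivalent. Combined with the count, $\Psi_1$ generates exactly $2^{n-3}$ de Bruijn sequences. The main technical subtlety is ensuring that an aperiodic PCR cycle of every weight $k^* \in \{2, \ldots, n-2\}$ exists to witness the selection difference; the explicit necklace $0^{n-1-k^*} 1^{k^*}$ resolves this uniformly.
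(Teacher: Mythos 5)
Your proposal is correct, and its counting step is exactly the paper's: the intermediate values $k_2,\ldots,k_{t-1}$ range over all subsets of $\{2,\ldots,n-2\}$, giving $\sum_{j=0}^{n-3}\binom{n-3}{j}=2^{n-3}$ choices. Where you go beyond the paper is the injectivity step. The paper's proof simply asserts that distinct choices of $t$ and $\{k_1,\ldots,k_t\}$ yield inequivalent de Bruijn sequences, with no supporting argument, whereas you actually prove it: you reduce inequivalence to distinctness of the critical sets via the spanning-tree bijection of van Aardenne-Ehrenfest and de Bruijn (which the paper itself states for $G_f$), and then exhibit a concrete witness cycle. Your witness works: for $k^*$ in the symmetric difference of the two parameter sets, the cycle with necklace $0^{n-1-k^*}1^{k^*}$ is aperiodic (it has a single cyclic $1$-run, so a proper period would force at least two), the shift assigned to weight $k^*$ is $k^*$ under one tuple and strictly smaller under the other, and your orbit computation $\Lambda^s(v_r)=e_{(r+s-1)\bmod k^*}$ shows different shifts in $\{1,\ldots,k^*\}$ select different states of that cycle, hence different conjugate pairs and different spanning trees. (The only coincidence worth ruling out, which your setup implicitly handles, is that the displaced pair cannot reappear as a pair chosen for another cycle: PCR-type pairs are indexed by their $(n-1)$-stage suffix, which pins down the cycle, and CCR-type pairs have suffix beginning with $0$ while your selected states begin with $1$.) In short, your argument buys a complete justification of the one claim the paper leaves unproved, at the cost of a page of orbit bookkeeping; the paper buys brevity by leaning on the reader to accept that distinct critical sets arise and force inequivalent sequences.
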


\begin{proof}
For each choice of $t$ and $\{k_1,k_2,\ldots,k_t\}$, with $k_1=1$, $k_{t-1}<n-1$, and $k_t=n$, we obtain a critical set of conjugate pairs to join all of the cycles. Distinct choices of $t$ and $\{k_1,k_2,\ldots,k_t\}$ yield inequivalent de Bruijn sequences. The total number of choices is
\begin{equation}\label{eq:numPsi1}
\binom{n-3}{0} + \binom{n-3}{1}  + \ldots + \binom{n-3}{n-3} = \sum_{j=0}^{n-3} 
\binom{n-3}{j} = 2^{n-3}.
\end{equation}
\end{proof}

The next theorem presents another way to uniquely identify a state in each cycle and obtain the corresponding successor rule.

\begin{theorem}\label{prop:2}
Let  $\Delta := \lcm(1,2,\ldots,n-2)$ and $1 \leq k \leq \Delta$. For each state $\mathbf{c} = c_0,c_1,\ldots,c_{n-1}$, let $\mathbf{u}_{\mathbf{c}} :=c_1,c_2,\ldots,c_{n-1}$.
The successor rule
\begin{equation}\label{new4}
\Psi_2 (\mathbf{c}) =
\begin{cases}
\overline{c_{0} + c_{1} + c_{n - 1} } & \mbox{if } c_1=0\ {\rm and }\ \mathbf{u}_{\mathbf{c}}  \mbox{ is a co-necklace}, \\
\overline{c_{0} + c_{1} + c_{n - 1} } & \mbox{if } c_1=1\ {\rm and }\ \Lambda^{k} \, \mathbf{u}_{\mathbf{c}} \mbox{ is a necklace}, \\
c_{0} + c_{1} + c_{n - 1}    & \mbox{otherwise},
\end{cases}
\end{equation}
generates a total of
\begin{equation}\label{eq:numPsi2}
\lcm(1,2,\ldots,n-2)  \geq 2^{n-3}
\end{equation}
de Bruijn sequences of order $n$ by varying $k$.
\end{theorem}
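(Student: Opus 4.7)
My plan is to verify the hypotheses of Theorem~\ref{thm:rho-1} for each admissible $k$, count the resulting critical sets, and then invoke a classical lower bound on $\Delta$.

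For correctness, the CCR clause is identical to that of Theorem~\ref{prop:1}. For the PCR clause, I would show that in each nonzero PCR cycle $\CP$ of order $n-1$ there is exactly one state $\mathbf{u}_{\mathbf{c}}$ beginning with $1$ for which $\Lambda^k \mathbf{u}_{\mathbf{c}}$ equals the necklace of $\CP$. Let $S_1$ and $E$ denote the sets of $(n-1)$-states of $\CP$ beginning with $1$ and ending with $1$, respectively; both have cardinality equal to the weight $w'$ of one period of $\CP$. On $S_1$, the operator $\Lambda$ is simply the cyclic left shift by one (the first $1$ sits at position $1$), hence a bijection $S_1 \to E$. On $E$, writing $\mathbf{u}=0^{a_1} 1\, 0^{a_2} 1\cdots 0^{a_{w'}} 1$, the operator $\Lambda$ cyclically rotates the gap tuple $(a_1,\ldots,a_{w'})$, hence is a cyclic permutation of order $w'$. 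Composing, $\Lambda^k\colon S_1\to E$ is a bijection for every $k\ge 1$; since the necklace of $\CP$ lies in $E$ (any trailing zero could be right-rotated to produce a lexicographically smaller rotation), it has a unique preimage $\mathbf{u}_k\in S_1$. The cycle $(1^{n-1})$ is degenerate with $S_1 = E = \{1^{n-1}\}$, which is already the necklace. Theorem~\ref{thm:rho-1} therefore yields a de Bruijn sequence.

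For the count, since $\Lambda$ restricted to $E$ is a cyclic permutation of order $w'$, the preimage $\mathbf{u}_k$ in $\CP$ depends only on $k$ modulo $w'$; two parameters $k_1, k_2$ give identical critical sets $A_{\Psi_2}(k_1)=A_{\Psi_2}(k_2)$ if and only if $k_1\equiv k_2$ modulo $M:=\lcm\{w'\colon \CP\text{ nonconstant}\}$. The inequality $w'\le d-1\le n-2$ (where $d\mid n-1$ is the period of $\CP$) gives $M\mid\Delta$. Conversely, for each $w\in\{1,\ldots,n-2\}$ the Lyndon word $0^{n-1-w} 1^w$ generates a full-period PCR cycle of weight $w$ on which $w'=w$; this forces $\Delta\mid M$, hence $M=\Delta$. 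Consequently the $\Delta$ admissible values of $k$ yield pairwise distinct critical sets, and via the \cite{AEB87} spanning-tree bijection, pairwise inequivalent de Bruijn sequences.

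Finally, the classical estimate $\lcm(1,2,\ldots,m)\ge 2^{m-1}$ for $m\ge 1$---provable, for instance, via the Beta-integral identity $\int_0^1 x^{k-1}(1-x)^{m-k}\,dx=\bigl(m\binom{m-1}{k-1}\bigr)^{-1}$ combined with $\sum_k \binom{m-1}{k}=2^{m-1}$---applied with $m=n-2$ gives $\Delta\ge 2^{n-3}$. The main technical obstacle is the bijection and cyclicity analysis of $\Lambda$ on $\CP$, which underpins both the uniqueness in correctness and the mod-$w'$ dependence in counting; with that in hand, matching each $w\in\{1,\ldots,n-2\}$ to an explicit Lyndon cycle and invoking the lcm estimate are routine.
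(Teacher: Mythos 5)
Your proposal is correct and takes essentially the same route as the paper: correctness via the generic Theorem~\ref{thm:rho-1}, counting by noting that the state singled out in each PCR cycle depends only on $k$ modulo that cycle's per-period weight so that the number of distinct critical sets is $\lcm(1,\ldots,n-2)$ (the paper packages this as a Chinese Remainder Theorem argument on the residues $k-1 \bmod i$), and the same bound $\lcm(1,\ldots,n-2)\geq 2^{n-3}$. The only differences are that you supply details the paper omits or cites --- the bijectivity of $\Lambda^k$ from the states of $\CP$ beginning with $1$ onto those ending with $1$, the explicit realization of every weight $1,\ldots,n-2$ by the cycle of $0^{n-1-w}1^w$, and Nair's integral proof of the lcm bound in place of the reference to Farhi --- which fleshes out rather than changes the argument.
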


\begin{proof}
The proof of the first part is similar to the proof of Theorem \ref{prop:1} and is, therefore, omitted here. To enumerate these de Bruijn sequences, we note that the weights of the PCR cycles vary from $1$ to $n-2$, except for $(1^{n-1})$ which has only one state. The weight of each $\CP \neq (1^{n-1})$ gives the number of states that matters. For any given $k > 0$, let $k'=k-1$ and $\mathbf{w}_{\mathbf{c}}= \Lambda \, \mathbf{u}_{\mathbf{c}}= L \mathbf{u}_{\mathbf{c}}$. We consider the system of congruences
\begin{equation}\label{cong}
\{ k' \equiv a_i \Mod{i} \mid 1 \leq i \leq (n-2)\}.
\end{equation}
Applying $\Psi_2$, if $\Lambda^k \, \mathbf{u}_{\mathbf{c}} = \Lambda^{k'} \, \mathbf{w}_{\mathbf{c}}$ is a necklace, then $\Lambda^{a_i} \,\mathbf{w}_{\mathbf{c}}$ is the necklace in a cycle of weight $i$. Hence, $a_i$ uniquely identifies a state in the cycle. So each possible choice of $\{a_1,\ldots,a_{n-2}\}$ uniquely identifies a collection of states, each belonging to a cycle. By the Chinese Remainder Theorem \cite{Ding}, the number of distinct choices of $\{a_1,\ldots,a_{n-2}\}$ is $\lcm(1,2,\ldots,n-2)$, which is obtained as $k$ runs from $1$ to $\Delta$. Hence, there are $\lcm(1,2,\ldots,n-2)$ de Bruijn sequences.  By  \cite[Section~2]{Farhi}, we obtain
\begin{equation}\label{eq:NumOfPsi2}
\lcm(1,2,\ldots,n-2) \geq  (n-2)  \binom{n-3}{\left\lfloor \frac{n-3}{2} \right\rfloor} \geq 2^{n-3}.
\end{equation}
\end{proof}

\begin{example}\label{ex:rho1_n6}
We continue from Example~\ref{ex:PCCR} to consider $\Psi$ for $n=6$. Theorem~\ref{prop:1} provides $8$ distinct successor rules. The resulting $8$ distinct de Bruijn sequences are listed in the first part of Table \ref{table:1}. Applying Theorem~\ref{prop:2}, again for $n=6$, yields the $12$ distinct de Bruijn sequences listed in the second part of Table~\ref{table:1}. For ease of comparison, the initial state is fixed to be $000000$.

We note that Theorems~\ref{prop:1} and \ref{prop:2} may produce equivalent sequences. In Table \ref{table:1}, the sequences in entries $1$, $2$, $3$, and $8$ based on Theorem \ref{prop:1} are the same as the sequences in entries $2$, $3$, $4$, and $1$ based on Theorem \ref{prop:2}, respectively. Table~\ref{table:1} contains $16$ distinct de Bruijn sequences in total.
	
\begin{table*}[h!]
	\caption{Inequivalent de Bruijn sequences constructed based on Theorems~\ref{prop:1} and~\ref{prop:2} with $n=6$.}
	\label{table:1}
	\renewcommand{\arraystretch}{1.2}
	\centering
	{\footnotesize
		\begin{tabular}{ccc}
			\hline
			Entry & $\{k_1,k_2,\ldots,k_t\}$ & The resulting sequence based on Theorem~\ref{prop:1}\\
			\hline
			$1$   & $\{1,6\}$                   & $(0000001111110000101111011101000110001001110011011001010110101001)$\\
			$2$   & $\{1,2,6\}$                 & $(0000001111110001100001011101011010010101000100110111101100111001)$\\
			$3$   & $\{1,3,6\}$                 & $(0000001111110011100001011101111010001100010011011010110010101001)$\\
			$4$   & $\{1,4,6\}$                 & $(0000001111110111100001011101000110001001110011011001010110101001)$\\
			$5$   & $\{1,2,3,6\}$               & $(0000001111110011100011000010111011110100101010001001101101011001)$\\
			$6$   & $\{1,2,4,6\}$               & $(0000001111110111100011000010111010110100101010001001101100111001)$\\
			$7$   & $\{1,3,4,6\}$               & $(0000001111110111100111000010111010001100010011011010110010101001)$\\
			$8$   & $\{1,2,3,4,6\}$             & $(0000001111110111100111000110000101110100101010001001101101011001)$\\
			\hline
			\hline
			Entry & $k$ & The resulting sequence based on Theorem~\ref{prop:2}\\
			\hline
			$1$   & $1$ & $(0000001111110111100111000110000101110100101010001001101101011001)$\\
			$2$   & $2$  & $(0000001111110000101111011101000110001001110011011001010110101001)$\\
			$3$   & $3$  & $(0000001111110001100001011101011010010101000100110111101100111001)$\\
			$4$   & $4$  & $(0000001111110011100001011101111010001100010011011010110010101001)$\\
			$5$   & $5$  & $(0000001111110111100011000010111010010101101010001001110011011001)$\\
			$6$   & $6$  & $(0000001111110000101111011101011010001100010011011001110010101001)$\\
			$7$   & $7$  & $(0000001111110011100011000010111010010101000100110111101101011001)$\\
			$8$   & $8$  & $(0000001111110000101110111101000110001001110011011001010110101001)$\\
			$9$   & $9$  & $(0000001111110111100011000010111010110100101010001001101100111001)$\\
			$10$   & $10$  & $(0000001111110011100001011110111010001100010011011010110010101001)$\\
			$11$   & $11$  & $(0000001111110001100001011101001010110101000100111001101111011001)$\\
			$12$   & $12$ & $(0000001111110000101110111101011010001100010011011001110010101001)$\\
			\hline
			
		\end{tabular}
	}
\end{table*}
\end{example}

There might be other ways to find more critical sets by uniquely identifying a state whose last bit is $1$ in each of the PCR cycles, leading to more successor rules. Interested readers are invited to invent their favourites.

\section{The second class of successor rules from the PRR}\label{sec:class2}

In this section we discuss another class of successor rules.

\begin{theorem}\label{thm:rho-2}
Let $n\geq 3$ be given. For each state ${\bf c}=c_0,c_1,\ldots,c_{n-1}$ produced by
$f_{\rm PRR}$, we define $\mathbf{x}_{\mathbf{c}}$ based on the value of $c_{n-1}$ as
\[
{\bf x}_{\mathbf{c}}:=
\begin{cases}
	0,c_1,c_2,\ldots,c_{n-1} & \mbox{if } c_{n-1}=0,\\
	\bar{c}_1,\bar{c}_2,\ldots,\bar{c}_{n-2}, 0,c_1 & \mbox{if } c_{n-1}=1.
\end{cases}
\]	
The states forming the conjugate pair $({\bf c}, \hat{{\bf c}})$ belong to a critical set $A_{\Upsilon}$ if exactly one of the following conditions is satisfied. 
\begin{enumerate}
\item[(i)] The state ${\bf x}_{\mathbf{c}}$ is a uniquely determined state whose first bit is $0$ in a PCR cycle $\CP$ if $c_{n-1}=0$.
\item[(ii)]  The state ${\bf x}_{\mathbf{c}}$ is the cycle representative of a CCR cycle $\CC$ if $c_{n-1}=1$.
\end{enumerate}
The successor rule
\begin{equation}\label{sr-2}
\Upsilon(\mathbf{c}) :=
\begin{cases}
\overline{c_{0} + c_{1} + c_{n - 1}}  & \mbox{if } {\bf c} \in A_{\Upsilon},\\
c_{0} + c_{1} + c_{n - 1}     & \mbox{otherwise,}
\end{cases}
\end{equation}
generates a de Bruijn sequence of order $n$.
\end{theorem}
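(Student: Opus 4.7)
The plan is to invoke Theorem~\ref{thm:old} with the lexicographic order on the cycles of $f_{{\rm PRR},n}$, showing that $A_{\Upsilon}$ forms a critical set of $\bar{Z}_n - 1$ conjugate pairs. First I would observe that $\mathbf{x}_{\mathbf{c}}$ depends only on $c_1,c_2,\ldots,c_{n-1}$ and that $c_{n-1}$ is common to $\mathbf{c}$ and $\hat{\mathbf{c}}$; hence $\mathbf{x}_{\mathbf{c}}=\mathbf{x}_{\hat{\mathbf{c}}}$, so each condition is invariant under flipping the first bit, and $A_{\Upsilon}$ is a union of conjugate pairs. A count then gives $Z^*_{n-1}$ pairs from condition (ii), one per CCR cycle via its cycle representative, together with $Z_{n-1}-1$ pairs from condition (i), one per PCR cycle other than $(1^{n-1})$ (which has no state with first bit $0$). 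The total $Z_{n-1}-1+Z^*_{n-1}=\bar{Z}_n - 1$ matches what Theorem~\ref{thm:old} demands.

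The crux is to show that for each non-root cycle $C_i$ there is a unique pair in $A_{\Upsilon}$ linking $C_i$ to a lexicographically smaller cycle. For a CCR cycle $\CC$ with cycle representative $\mathbf{r}_{\CC}=(0,r_1,\ldots,r_{n-3},0,1)$, solving $\mathbf{x}_{\mathbf{c}}=\mathbf{r}_{\CC}$ with $c_{n-1}=1$ forces $c_1=1$ and $c_i=\bar{r}_{i-1}$ for $2\leq i\leq n-2$, so the pair consists of $(0,1,\bar{r}_1,\ldots,\bar{r}_{n-3},1)\in\CC$ and $(1,1,\bar{r}_1,\ldots,\bar{r}_{n-3},1)$ lying in the PCR cycle $\CP$ whose underlying $(n-1)$-stage sequence begins $(1,1,\bar{r}_1,\ldots,\bar{r}_{n-3})$. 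For a PCR cycle $\CP$ other than $(1^{n-1})$, the chosen marker $\mathbf{m}_{\CP}=(0,m_1,\ldots,m_{n-2},0)$ supplies an analogous pair, with its conjugate $(1,m_1,\ldots,m_{n-2},0)$ lying in a uniquely determined CCR cycle. To pin down the lex relation I would exploit the fact that every CCR cycle representative begins with the longest attainable run of zeros and satisfies $r_{n-2}=0$, which forces specific position-by-position inequalities between the two cycle representatives; Lemma~\ref{lem:Jansen} applied to $\mathbf{r}_{\CC}$, whose companion necessarily lies in a strictly smaller cycle, supplies a clean anchor for the remaining comparisons.

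The main difficulty is executing the case analysis exhaustively so that every non-root cycle receives precisely one qualifying pair, yielding a genuine spanning tree on the $\bar{Z}_n$ cycles rooted at $(0^n)$. I expect the overall structure to parallel the proof of Theorem~\ref{thm:rho-1}, but with the subtlety that $\mathbf{x}_{\mathbf{c}}$ in the $c_{n-1}=1$ branch sits at an offset inside the CCR cycle rather than being an immediate successor of $\mathbf{c}$, so the bridge from Lemma~\ref{lem:Jansen} to the conjugate-pair structure is one step less direct than in Theorem~\ref{thm:rho-1}. Once the unique-parent property is secured, Theorem~\ref{thm:old} closes the argument and confirms that $\Upsilon$ generates a de Bruijn sequence of order $n$.
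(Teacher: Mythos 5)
Your plan hinges on invoking Theorem~\ref{thm:old} with the \emph{lexicographic} order on all cycles of the PRR and then showing that every non-root cycle is linked by its designated pair to a lexicographically smaller cycle. That claim is false for the critical set $A_{\Upsilon}$, so the approach cannot be completed. Two concrete obstructions for $n=6$: the designated state of the PCR cycle $\CP=(00101)$ (take its representative $001010$, first bit $0$) has conjugate $101010$, which lies in the CCR cycle $(0101010101)$ with representative $010101 \succ_{\rm lex} 001010$; and the CCR cycle $\CC=(0101010101)$ has its condition-(ii) state ${\bf c}=010101$ with conjugate $110101$ lying in the PCR cycle $(01011)$ whose representative $010110$ is again lexicographically larger. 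Moreover, the only cycle with no designated pair is $(1^{n-1})$, so the root of the spanning tree forced by $A_{\Upsilon}$ is the lexicographically \emph{largest} cycle, while the lex-least cycle $(0^{n-1})$ carries a designated pair pointing to $(0^{n-1}1^{n-1})$, which is lex-larger. So under $\prec_{\rm lex}$ some cycles have no downward pair and others have two, and the hypothesis of Theorem~\ref{thm:old} simply does not hold; your intended use of Lemma~\ref{lem:Jansen} also misfires here, because in the $c_{n-1}=1$ branch the companion of the representative $\mathbf{x}_{\mathbf{c}}$ lies in a lex-smaller PCR cycle, but that cycle is in general not the cycle containing $\hat{\mathbf{c}}$, which is what the conjugate-pair argument needs.

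The paper's proof avoids exactly this trap by \emph{not} ordering all cycles lexicographically. It orders only the CCR cycles lexicographically, declares each PCR cycle $\CP\neq(1^{n-1})$ adjacent to the unique CCR cycle containing the conjugate of its designated state and places it immediately after that CCR cycle, and puts $(1^{n-1})$ at the bottom as the root. The heart of the argument is a compatibility lemma: if a CCR cycle $\CC\neq(0^{n-1}1^{n-1})$ is adjacent (via its condition-(ii) pair) to a PCR cycle $\CP$, and $\CP$ is adjacent (via its condition-(i) pair) to a CCR cycle $\CC'$, then $\CC'\prec_{\rm lex}\CC$; this is proved by exhibiting in $\CC'$ a state that is lexicographically smaller than the representative of $\CC$. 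That lemma is what makes the interleaved order $(1^{n-1})\prec\CC_1\prec(\text{its PCR cycles})\prec\CC_2\prec\cdots$ consistent, so that $A_{\Upsilon}$ is a genuine critical set and Theorem~\ref{thm:old} applies. Your write-up contains the correct bookkeeping (the pair count $(Z_{n-1}-1)+Z^*_{n-1}=\bar{Z}_n-1$ and the invariance of the conditions under flipping $c_0$), but without replacing the global lexicographic order by such a bespoke order (or an equivalent spanning-tree argument), the unique-parent step you flag as the ``main difficulty'' is not merely difficult; it is unattainable as stated.
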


\begin{proof}
We prove that we can define an order $\prec$ on all cycles of the PRR and  $A_{\Upsilon}$  is a critical set. For any two distinct CCR cycles $\CC_i$ and $\CC_j$,  we define $\CC_i  \ { \prec} \  \CC_j$  if $\CC_i  \ { \prec}_{\rm lex} \CC_j$. This allows us to order all of the CCR cycles lexicographically as 
\[
(0^{n-1}1^{n-1}) = \CC_1 \prec \CC_2 \prec \cdots \prec \CC_{t} \mbox{ with } t:=Z_{n-1}^*.
\]

Our next task is to include the other cycles. It is clear that each cycle, except for $(1^{n-1})$, has a unique state ${\bf x}_{\mathbf{c}}$ satisfying exactly one of the two conditions $(i)$ and $(ii)$.
 
If ${\bf x}_{\mathbf{c}}$ satisfies condition $(ii)$, then ${\bf x}_{\mathbf{c}}$ belongs to a CCR cycle $\CC$. Hence, either ${\bf c}$ or $\hat{{\bf c}}$ belongs to the same CCR cycle $\CC$. Without loss of generality, we assume ${\bf c} \in \CC$. If $\CC  = (0^{n-1}1^{n-1})$, which is the lexicographically least among all of the CCR cycles, then the cycle representative is ${\bf x}_{\mathbf{c}}=0^{n-1}1$. In this case, we have ${\bf c}=01^{n-1}$ and, thus, $\hat{{\bf c}}$ is in the PCR cycle $(1^{n-1})$. We order $(1^{n-1}) \prec (0^{n-1}1^{n-1})$. If ${\bf c} \in \CC  \neq (0^{n-1}1^{n-1})$, then $c_{n-1} =1$ and $c_0 =0$. Since ${\bf x}_{\mathbf{c}} = \bar{c}_1,\bar{c}_2,\ldots,\bar{c}_{n-2},0,c_1 $ is the cycle representative of a CCR cycle $\CC$, we must have $\bar{c}_1 =0$ and, hence, $c_1 = c_{n-1} = 1$. Thus, $\hat{{\bf c}} $ must be in some PCR cycle $\CP=(1,c_1,\ldots,c_{n-1})$ and we say that $\CC$ is adjacent to $\CP$. 

On the other hand, if ${\bf x}_{\mathbf{c}} = {\bf c}$ satisfies condition $(i)$, then   ${\bf c}$ belongs to a PCR cycle $\CP$. Hence, $\hat{{\bf c}}$ is in a CCR cycle $\CC'$. When this is the case, we say that $\CP$ is adjacent to $\CC'$. 
 
If a CCR cycle $\CC \neq(0^{n-1}1^{n-1})$ is adjacent to a PCR cycle $\CP$ and $\CP$ is adjacent to another CCR cycle $\CC'$, then $\CC'{ \prec}_{\rm lex} \CC$. From the above discussion, if ${\bf c}=c_0,c_1,\ldots,c_{n-1}  \in \CC$, then 
$\hat{{\bf c}} = 1,c_1,\ldots,c_{n-1}$ is a state in $\CP$. A shift of 
$\hat{\bf c}$, say ${\bf u}:=c_j,c_{j+1},\ldots,c_{n-2},1,\ldots,c_j$, with $c_j=0$ for some $2 \leq j < n-1$, can then be uniquely determined in $\CP$. Its conjugate state $\hat{\bf u}=1,c_{j+1},\ldots,c_{n-2},1,\ldots,c_j$ is therefore in a CCR cycle $\CC'$. We next consider a shift of $\hat{\bf u}$, say ${\bf w} := 0, \bar{c}_1, \bar{c}_2,\ldots,\bar{c}_j,c_{j+1},\ldots,c_{n-2}, 1$, which must also be in $\CC'$. Because $\bar{c}_j=1$, we know that ${\bf w}$ is lexicographically less than the cycle representative ${\bf x}_{\mathbf{c}}=\bar{c}_1,\bar{c}_2,\ldots,\bar{c}_{n-2},0,c_1$ of $\CC$. Hence  $\CC'{ \prec}_{\rm lex} \CC$. In this case, we order $\CC' \prec \CP  \prec \CC$.  

We can now define the order among all cycles of the PRR. We note that each PCR cycle  $\CP \neq (1^{n-1})$ contains a uniquely determined state ${\bf c} = {\bf x}_{\mathbf{c}}$. Hence, there exists a unique CCR cycle $\CC$ containing $\hat{{\bf c}}$ such that $\CP$ is adjacent to $\CC$ and $\CC \prec \CP$. 

For $i$ from $1$ to $t$, we collect all PCR cycles $\CP_{i,1},\ldots,\CP_{i,s_i}$ that are adjacent to $\CC_i$. Obviously, $\CC_i\prec\CP_{i,j}$. Moreover, when $\CC_{i+1}$ exists, we fix $\CP_{i,j}\prec\CC_{i+1}$. The order on $\CP_{i,1},\ldots,\CP_{i,s_i}$ can be chosen arbitrarily. The cycles produced by the PRR are finally ordered as  
\begin{multline}
(1^{n-1}) \prec \CC_1\prec \mbox{(all PCR cycles adjacent with } \CC_1 \mbox{ in any order)} \prec \CC_2 \prec \\
\mbox{(all PCR cycles adjacent with} \CC_2 \mbox{ in any order)} \prec\CC_3
\prec \cdots \prec \\
\CC_t \prec \mbox{(all PCR cycles adjacent with } \CC_t \mbox{ in any order)}.
\end{multline}

The above procedure ensures that $A_{\Upsilon}$ is indeed a critical set. Each conjugate pair $({\bf c}, \hat{{\bf c}})$ contributes ${\bf c}$ and $\hat{{\bf c}}$ to $A_{\Upsilon}$. Again, either ${\bf c}$ or $\hat{{\bf c}}$ belongs to the same cycle  as ${\bf x}_{\mathbf{c}}$ does. Without loss of generality, we assume ${\bf c}$ and 
${\bf x}_{\mathbf{c}}$ belong to the same cycle. Then the cycle containing both 
${\bf x}_{\mathbf{c}}$ and ${\bf c}$ is greater than the cycle containing $\hat{\bf c}$ in our newly defined order. By Theorem \ref{thm:old}, the successor rule generates a de Bruijn sequence.
\end{proof}

We observe from the proof of Theorem~\ref{thm:rho-2} that ordering all cycles in a line is not strictly neccesary. We can in fact still be able to generate a de Bruijn sequence as long as all pairs of cycles containing the conjugate pairs in the set are comparable so that we can find a critical set of states that form the conjugate pairs. This means that we can slightly generalize Theorem~\ref{thm:old} by relaxing the condition that all cycles must be ordered. A typical rooted tree can be constructed following the successor rule $\Upsilon$ in the proof of Theorem~\ref{thm:rho-2}. We demonstrate it in  Figure~\ref{fig:f1}. 

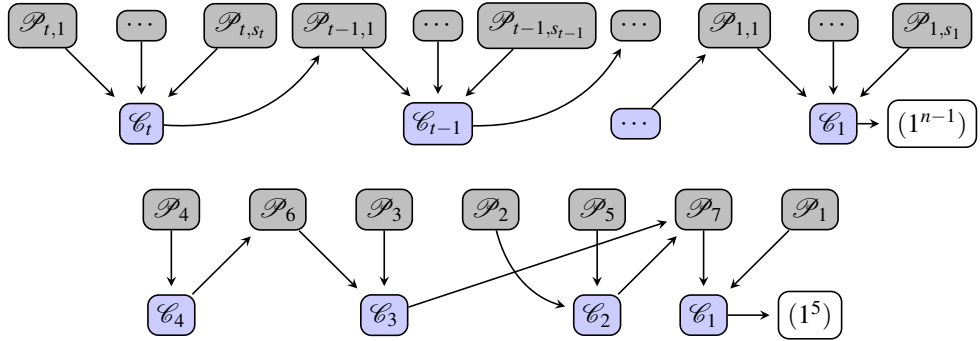
\begin{figure}
	\centering
	\begin{tikzpicture}
		[
		> = stealth,
		shorten > = 3pt,
		auto,
		node distance = 1.3cm,
		semithick
		]
		
		\tikzstyle{every state}=
		\node[rectangle,fill=white,draw,rounded corners,minimum size = 4mm]
		\node[state,fill=lightgray] (1) {$\CP_{t,1}$};
		\node[state,fill=lightgray] (2) [right of=1] {$\cdots$};
		\node[state,fill=lightgray] (3) [right of=2] {$\CP_{t,s_t}$};
		
		\node[state,fill=lightgray] (4) [right of=3]{$\CP_{t-1,1}$};
		\node[state,fill=lightgray] (5) [right of=4] {$\cdots$};
		\node[state,fill=lightgray] (6) [right of=5] {$\CP_{t-1,s_{t-1}}$};
		
		\node[state,fill=lightgray] (7) [right of=6] {$\cdots$};
		
		\node[state,fill=lightgray] (8) [right of=7]{$\CP_{1,1}$};
		\node[state,fill=lightgray] (9) [right of=8] {$\cdots$};
		\node[state,fill=lightgray] (10) [right of=9] {$\CP_{1,s_{1}}$};
		
		\node[state,fill=blue!20] (11) [below of=2] {$\CC_t$};
		\node[state,fill=blue!20] (12) [below of=5] {$\CC_{t-1}$};
		\node[state,fill=blue!20] (13) [below of=7] {$\cdots$};
		\node[state,fill=blue!20] (14) [below of=9] {$\CC_1$};
		\node[state] (15) [right of=14] {$(1^{n-1})$};
		
		\path[->] (1) edge (11);
		\path[->] (2) edge (11);
		\path[->] (3) edge (11);
		
		\path[->] (11) edge[bend right] (4);
		
		\path[->] (4) edge (12);
		\path[->] (5) edge (12);
		\path[->] (6) edge (12);
		
		\path[->] (12) edge[bend right] (7);
		\path[->] (13) edge (8);
		
		\path[->] (8) edge (14);
		\path[->] (9) edge (14);
		\path[->] (10) edge (14);
		
		\path[->] (14) edge (15);
	\end{tikzpicture}
	
	\vspace{0.5cm}
	
	\begin{tikzpicture}
		[
		> = stealth,
		shorten > = 3pt,
		auto,
		node distance = 1.4cm,
		semithick
		]
		
		\tikzstyle{every state}=
		\node[rectangle,fill=white,draw,rounded corners,minimum size = 4mm]
		\node[state,fill=lightgray] (1) {$\CP_4$};
		\node[state,fill=lightgray] (2) [right of=1] {$\CP_6$};
		\node[state,fill=lightgray] (3) [right of=2] {$\CP_3$};
		\node[state,fill=lightgray] (4) [right of=3]{$\CP_2$};
		\node[state,fill=lightgray] (5) [right of=4] {$\CP_5$};
		\node[state,fill=lightgray] (6) [right of=5] {$\CP_7$};
		\node[state,fill=lightgray] (7) [right of=6] {$\CP_1$};
		
		\node[state,fill=blue!20] (8) [below of=1] {$\CC_4$};
		\node[state,fill=blue!20] (9) [below of=3] {$\CC_3$};
		\node[state,fill=blue!20] (10) [below of=5] {$\CC_2$};
		\node[state,fill=blue!20] (11) [below of=6] {$\CC_1$};
		
		\node[state] (12) [right of=11] {$(1^{5})$};
		
		\path[->] (1) edge (8);
		\path[->] (2) edge (9);
		\path[->] (3) edge (9);
		
		\path[->] (4) edge[bend right] (10);
		\path[->] (5) edge (10);
		
		\path[->] (8) edge(2);
		
		\path[->] (9) edge (6);
		
		\path[->] (10) edge (6);
		
		\path[->] (6) edge (11);
		\path[->] (7) edge (11);
		
		\path[->] (11) edge (12);
		
	\end{tikzpicture}
	
\caption{{{\bf Above}: A typical rooted tree based on the successor rule $\Upsilon$, with $\CC_1=(0^{n-1}1^{n-1})$ and $\CP_{1,s_{1}}=(0^{n-1})$. Letting $t:=Z_{n-1}^*$, the CCR cyles are arranged in decreasing lexicographic order $\CC_t { \succ}_{\rm lex} \CC_{t-1} { \succ}_{\rm lex} \cdots { \succ}_{\rm lex} \CC_1$ from left to right. It may be the case that there are more than one CCR cycles, say $\CC_i$ and $\CC_j$ with $1 \leq i \neq j \leq t$, each having a directed edge to a common PCR cycle. {\bf Below}: A rooted tree when $n=6$, using the cycles specified in Example~\ref{ex:PCCR}. In each PCR cycle $\CP \neq (11111)$, its cycle representative is chosen as the uniquely determined state ${\bf x}_{\mathbf{c}}$, which happens to be the state ${\bf c}$ itself. Based on the respective cycle representatives of the CCR cycles, we use as our ${\bf c} \in A_{\Upsilon}$ the state $011111 \in \CC_1$, $011101 \in \CC_2$, $011011 \in \CC_3$, and $010101 \in \CC_4$.}}\label{fig:f1}
\end{figure}

By Theorem~\ref{thm:rho-2}, distinct ways of determining the unique state whose first bit is $0$ in any PCR cycle $\CP$ lead to different successor rules, generating inequivalent de Bruijn sequences. Similar with the operator $\Lambda$, we define an operator $\Theta$ that fixes $1^{n-1}$ and $01^{n-2}$. In all other cases, if $\mathbf{u} :=c_1, c_2,\ldots,c_{n-1}$ and $i$ is the least index for which $c_i=0$, with $2 \leq i < n$, then
\[
\Theta \, \mathbf{u} = c_{i},\ldots,c_{n-1},c_1,\ldots,c_{i-1} \mbox{, where } \Theta^0 \, \mathbf{u} := \mathbf{u} \mbox{ and } \Theta^r \, \mathbf{u} := \Theta^{r-1} \, (\Theta \, \mathbf{u}).
\]

The next two theorems construct numerous explicit successor rules based on Theorem \ref{thm:rho-2}. Their respective proofs follow the same line of argument as those of Theorems \ref{prop:1} and \ref{prop:2} and are omitted here for brevity.

\begin{theorem}\label{prop:3}
Let positive integer $n$, $t$, and $k_1, \ldots, k_t$ be such that
\[
n \geq 3, \quad 2 \leq t \leq n-1, \quad 1 = k_1 < k_2 < \ldots < k_{t} =n, \quad \mbox{and } k_{t-1} < n-1.
\]
For each state $\mathbf{c} = c_0,c_1,\ldots,c_{n-1}$, let $\mathbf{y}_{\mathbf{c}} := \bar{c}_1, \bar{c}_2, \ldots, \bar{c}_{n-2}, 0$ and $\mathbf{w}_{\mathbf{c}} := 0,c_1,\ldots,c_{n-2}$.
The successor rule
\begin{equation}\label{new5}
\Upsilon_1(\mathbf{c}) =
\begin{cases}
\overline{c_{0} + c_{1} + c_{n - 1}}  & \mbox{ if } c_{n-1}=1 \mbox{ and } \mathbf{y}_{\mathbf{c}} \mbox{ is a co-necklace}, \\
\overline{c_{0} + c_{1} + c_{n - 1}} & \mbox{ if } c_{n-1}=0 \mbox{ and there is an } i \mbox{ such that}\\
&\quad k_i \leq {\rm wt}(\bar{\mathbf{w}}_{\mathbf{c}}) < k_{i+1} \mbox{ and } \Theta^{k_i-1} \, \mathbf{w}_{\mathbf{c}} \mbox{ is a necklace},\\
c_{0} + c_{1} + c_{n - 1}    & \mbox{otherwise,}
\end{cases}
\end{equation}
generates $2^{n-3}$ inequivalent de Bruijn sequences of order $n$ by taking all possible parameters.
\end{theorem}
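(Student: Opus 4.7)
The plan is to apply Theorem~\ref{thm:rho-2} to $\Upsilon_1$ and then to count parameter tuples exactly as in Proposition~\ref{lem:3}. I would split on the value of $c_{n-1}$ and verify, in each branch of the rule, that $\Upsilon_1$ selects exactly one state per nonzero cycle with the property demanded by Theorem~\ref{thm:rho-2}.

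For $c_{n-1}=1$, the state $\mathbf{x}_{\mathbf{c}}=\bar{c}_1,\ldots,\bar{c}_{n-2},0,c_1$ of Theorem~\ref{thm:rho-2} has last bit $c_1$, which is the complement of its first bit $\bar{c}_1$, so $\mathbf{x}_{\mathbf{c}}$ belongs to a PRR cycle coming from a CCR cycle of order $n-1$. Its length-$(n-1)$ prefix is exactly $\mathbf{y}_{\mathbf{c}}=\bar{c}_1,\ldots,\bar{c}_{n-2},0$, and the $n$-stage cycle representative of such a CCR cycle is determined by its co-necklace prefix together with the forced complementary last bit. Hence ``$\mathbf{y}_{\mathbf{c}}$ is a co-necklace'' is equivalent to ``$\mathbf{x}_{\mathbf{c}}$ is the CCR cycle representative,'' which is exactly condition~(ii) of Theorem~\ref{thm:rho-2}.

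For $c_{n-1}=0$, the state $\mathbf{x}_{\mathbf{c}}=0,c_1,\ldots,c_{n-2},0$ lies in some PCR cycle $\CP$ of order $n-1$, and its length-$(n-1)$ projection $\mathbf{w}_{\mathbf{c}}=0,c_1,\ldots,c_{n-2}$ is a $0$-starting rotation of $\CP$. The key structural claim I would verify is that $\Theta$ restricts to a single-orbit cyclic permutation of the set of $0$-starting rotations of $\CP$, whose size is $n-1-{\rm wt}(\mathbf{w}_{\mathbf{c}})={\rm wt}(\bar{\mathbf{w}}_{\mathbf{c}})$; iterating $\Theta$ advances from one $0$-starting rotation to the next by shifting past the current leading zero to the next zero. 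The bucket condition $k_i\le{\rm wt}(\bar{\mathbf{w}}_{\mathbf{c}})<k_{i+1}$ uniquely assigns an index $i$ to $\CP$, and because $k_i-1<{\rm wt}(\bar{\mathbf{w}}_{\mathbf{c}})$ is strictly below the period of $\Theta$ on this cycle, the requirement ``$\Theta^{k_i-1}\mathbf{w}_{\mathbf{c}}$ is a necklace'' pins down the unique $\Theta^{-(k_i-1)}$-preimage of the necklace. This provides the uniquely determined state of condition~(i) of Theorem~\ref{thm:rho-2}. The degenerate cycles $(0^{n-1})$, $(01^{n-2})$, and the root $(1^{n-1})$ I would address by direct inspection, using that $\Theta$ fixes the first two by definition and that $(1^{n-1})$ is linked into the tree through the CCR clause applied to $\CC_1=(0^{n-1}1^{n-1})$ rather than through any PCR state.

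Once $A_{\Upsilon}$ has been confirmed as a critical set, Theorem~\ref{thm:rho-2} immediately yields that $\Upsilon_1$ generates a de Bruijn sequence of order $n$. To enumerate the parameter tuples, I would imitate Proposition~\ref{lem:3}: $k_1=1$ and $k_t=n$ are forced while $k_2<\cdots<k_{t-1}$ ranges over arbitrary subsets of $\{2,3,\ldots,n-2\}$, giving $\sum_{j=0}^{n-3}\binom{n-3}{j}=2^{n-3}$ valid tuples, and distinct tuples produce distinct critical sets and hence inequivalent de Bruijn sequences. The main obstacle, as I see it, is the clean verification that $\Theta$ is a single-cycle permutation of the $0$-starting rotations of every nontrivial PCR cycle; once that structural fact is in hand, the remainder follows by direct imitation of the proofs of Theorems~\ref{prop:1} and~\ref{prop:2}.
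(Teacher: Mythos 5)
Your proposal is correct and follows essentially the same route as the paper, which omits the proof of this theorem precisely because it ``follows the same line of argument'' as Theorems~\ref{prop:1} and~\ref{prop:2}: verify the two branches of $\Upsilon_1$ realize conditions (ii) and (i) of Theorem~\ref{thm:rho-2}, then count parameter sets as in Proposition~\ref{lem:3}. One cosmetic remark: your claim that the set of $0$-starting states of a PCR cycle has size ${\rm wt}(\bar{\mathbf{w}}_{\mathbf{c}})$ fails for cycles whose least period divides $n-1$ properly, but this is harmless since uniqueness only needs $\Theta$ to act as a cyclic permutation on the $0$-starting states, not any bound relating $k_i-1$ to the orbit length.
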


\begin{theorem}\label{prop:4}
For each $\mathbf{c} = c_0,c_1,\ldots,c_{n-1}$, let $\mathbf{y}_{\mathbf{c}} :=\bar{c}_1,\bar{c}_2,\ldots,\bar{c}_{n-2},0$ and $\mathbf{w}_{\mathbf{c}} := 0,c_1,\ldots,c_{n-2}$. Let $k$ be a nonnegative integer. The successor rule
\begin{equation}\label{new6}
\Upsilon_2(\mathbf{c}) =
\begin{cases}
\overline{c_{0} + c_{1} + c_{n - 1}}  & \mbox{ if } c_{n-1}=1 \mbox{ and } \mathbf{y}_{\mathbf{c}} \mbox{ is a co-necklace},\\
\overline{c_{0} + c_{1} + c_{n - 1}}  & \mbox{ if } c_{n-1}=0 \mbox{ and } \Theta^{k}\,\mathbf{w}_{\mathbf{c}} \mbox{ is a necklace}, \\
c_{0} + c_{1} + c_{n - 1}     & \mbox{otherwise,}
\end{cases}
\end{equation}
generates $\lcm(1,2,\ldots,n-2)$ inequivalent de Bruijn sequences of order $n$.
\end{theorem}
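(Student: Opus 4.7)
The plan is to mimic the two-step proof of Theorem \ref{prop:2}, but grounded in Theorem \ref{thm:rho-2} rather than Theorem \ref{thm:rho-1}. First I would verify that the two flagged cases in $\Upsilon_2$ correspond precisely to conditions (ii) and (i) of Theorem \ref{thm:rho-2}, so that $\Upsilon_2$ produces a de Bruijn sequence for every nonnegative $k$. Second I would count the inequivalent sequences obtained by varying $k$, using the Chinese Remainder Theorem in the same way as in the proof of Theorem \ref{prop:2}.

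For correctness, we split on $c_{n-1}$. When $c_{n-1}=1$, the auxiliary state $\mathbf{x}_{\mathbf{c}}=\bar{c}_1,\ldots,\bar{c}_{n-2},0,c_1$ of Theorem \ref{thm:rho-2} is the cycle representative of a CCR cycle in the PRR sense exactly when $\mathbf{y}_{\mathbf{c}}=\bar{c}_1,\ldots,\bar{c}_{n-2},0$ is the co-necklace of the corresponding CCR cycle of order $n-1$; so the first case of $\Upsilon_2$ matches condition (ii). When $c_{n-1}=0$, we have $\mathbf{x}_{\mathbf{c}}=0,c_1,\ldots,c_{n-2},0$, which sits in a PCR cycle $\CP$ of order $n-1$ and whose first $n-1$ bits form $\mathbf{w}_{\mathbf{c}}=0,c_1,\ldots,c_{n-2}$. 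Since $\Theta$ cyclically permutes the $(n-1)$-stage states of $\CP$ whose first bit is $0$, the condition ``$\Theta^{k}\,\mathbf{w}_{\mathbf{c}}$ is a necklace'' singles out a unique such $\mathbf{w}_{\mathbf{c}}$ in $\CP$, which is equivalent to pinning down $\mathbf{x}_{\mathbf{c}}$ uniquely in the PRR view. This matches condition (i), so invoking Theorem \ref{thm:rho-2} delivers a de Bruijn sequence of order $n$ for every nonnegative $k$.

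For the enumeration, each PCR cycle of order $n-1$ with weight $j\in\{1,\ldots,n-2\}$ contributes exactly $n-1-j$ states whose first bit is $0$, and $\Theta$ acts on those states as a cyclic permutation of order $n-1-j$. The cycles $(0^{n-1})$ and $(1^{n-1})$ are degenerate: the former has a single state, and the latter is handled as the root in the tree prescribed by Theorem \ref{thm:rho-2}. Hence the state chosen in any PCR cycle of weight $j$ depends only on $k \bmod (n-1-j)$. As $k$ ranges over $\{0,1,\ldots,\lcm(1,2,\ldots,n-2)-1\}$, the Chinese Remainder Theorem shows that the tuple $(k \bmod 1, k \bmod 2,\ldots,k \bmod (n-2))$ assumes exactly $\lcm(1,2,\ldots,n-2)$ distinct values. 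Because distinct residue tuples produce distinct critical sets $A_{\Upsilon}$, and distinct critical sets yield inequivalent de Bruijn sequences, the count follows.

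The main technical obstacle I anticipate is the bookkeeping needed to translate condition (i) of Theorem \ref{thm:rho-2}, which speaks of $n$-stage PRR states whose first bit is $0$, into a statement about $(n-1)$-stage states of a PCR cycle of order $n-1$ on which $\Theta$ naturally acts, and analogously for the CCR side via $\mathbf{y}_{\mathbf{c}}$ and the co-necklace. Once this translation is in place, both the correctness argument and the CRT enumeration become direct adaptations of the proofs of Theorems \ref{thm:rho-2} and \ref{prop:2}.
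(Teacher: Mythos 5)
Your proposal matches the paper's treatment: the paper gives no separate proof of this theorem, stating only that it follows the same line of argument as Theorems~\ref{prop:1} and~\ref{prop:2} but grounded in Theorem~\ref{thm:rho-2}, which is exactly what you do (matching the two cases of $\Upsilon_2$ to conditions (i)--(ii) via $\mathbf{y}_{\mathbf{c}}$ and $\mathbf{w}_{\mathbf{c}}$, then counting by the Chinese Remainder Theorem as $k$ varies). The one slight imprecision---asserting that every weight-$j$ PCR cycle contributes exactly $n-1-j$ zero-starting states, which fails for cycles of least period smaller than $n-1$---is present in the paper's own proof of Theorem~\ref{prop:2} as well and does not affect the conclusion, since full-period witness cycles realize every modulus in $\{1,\ldots,n-2\}$.
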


\begin{example}
Let us consider successor rules in the class $\Upsilon$ for $n=6$. Using Theorem~\ref{prop:3}, we obtain $8$ distinct successor rules, resulting in $8$ distinct de Bruijn sequences. Theorem~\ref{prop:4} gives us $12$ distinct de Bruijn sequences. In Table \ref{table:3}, the sequences in entries $1$, $2$, $3$, and $8$ on Theorem \ref{prop:3} are the same as the sequences in entries $2$, $3$, $4$, and $1$ on Theorem \ref{prop:4}, respectively. Table~\ref{table:3} contains $16$ distinct de Bruijn sequences in total.

\begin{table*}[h!]
\caption{Inequivalent de Bruijn sequences constructed based on Theorems~\ref{prop:3} and~\ref{prop:4} with $n=6$.}
\label{table:3}
\renewcommand{\arraystretch}{1.2}
\centering
{\footnotesize
	\begin{tabular}{ccc}
		\hline
		Entry & $\{k_1,k_2,\ldots,k_t\}$ & The resulting de Bruijn sequence based on Theorem~\ref{prop:3}\\
			\hline
			$1$   & $\{1,6\}$       & $(0000001111110100010000101110011101101010010101100100110001101111)$\\
			$2$   & $\{1,2,6\}$     & $(0000001111110101011010001011101100101001001101111001110001100001)$\\
			$3$   & $\{1,3,6\}$     & $(0000001111110100101000010001011100111011010101100011001001101111)$\\
			$4$   & $\{1,4,6\}$     & $(0000001111110100010111001110110101001010110010000100110001101111)$\\
			$5$   & $\{1,2,3,6\}$   & $(0000001111110101011010010100001000101110110001100100110111100111)$\\
			$6$   & $\{1,2,4,6\}$  & $(0000001111110101011010001011101100101001000010011011110011100011)$\\
			$7$   & $\{1,3,4,6\}$   & $(0000001111110100101000101110011101101010110001100100001001101111)$\\
			$8$   & $\{1,2,3,4,6\}$ & $(0000001111110101011010010100010111011000110010000100110111100111)$\\
			\hline
			\hline
			Entry & $k$  & The resulting de Bruijn sequence based on Theorem~\ref{prop:4}\\
			\hline	
			$1$  & $0$ & $(0000001111110101011010010100010111011000110010000100110111100111)$ \\
			$2$  & $1$  & $(0000001111110100010000101110011101101010010101100100110001101111)$ \\
			$3$  & $2$  & $(0000001111110101011010001011101100101001001101111001110001100001)$ \\
			$4$  & $3$  & $(0000001111110100101000010001011100111011010101100011001001101111)$ \\
			$5$  & $4$  & $(0000001111110101001010110100010111011001000010011000110111100111)$ \\
			$6$  & $5$  & $(0000001111110100010000101110011101101010110010100100110111100011)$ \\
			$7$  & $6$  & $(0000001111110101011010010100010111011000110010011011110011100001)$ \\
			$8$  & $7$  & $(0000001111110100001000101110011101101010010101100100110001101111)$ \\
			$9$  & $8$  & $(0000001111110101011010001011101100101001000010011011110011100011)$ \\
			$10$  & $9$  & $(0000001111110100101000100001011100111011010101100011001001101111)$ \\
			$11$  & $10$  & $(0000001111110101001010110100010111011001001100011011110011100001)$ \\
			$12$  & $11$ & $(0000001111110100001000101110011101101010110010100100110111100011)$ \\
			\hline
		\end{tabular}
}
\end{table*}
\end{example}

There are a number of alternatives to determine a unique state in a PCR cycle $\CP$ whose first bit is $0$ that will result in valid new successor rules for de Bruijn sequences. We omit the details here since Theorems~\ref{prop:3} and~\ref{prop:4} have already highlighted some of these possibilities.

\section{Complexity}\label{con}

We end by considering the complexity of the successor rules constructed in this paper. It is clear that the space complexity is $O(n)$. In the cycle structure of the PRR of order $n$, checking the cycle representative of a CCR cycle $\CC$ is equivalent to checking the co-necklace in a cycle generated by the CCR of order $n-1$. To pinpoint a unique state in a PCR cycle $\CP$ is equivalent to checking whether a state is a necklace in a cycle generated by the PCR of order $n-1$ after simple left shifts. The latter can be done in $O(n)$ time as was established in~\cite{Gabric18}. Thus, each successor rule in the two classes $\Lambda$ and $\Upsilon$ requires time and space complexities $O(n)$ to generate the next bit of a de Bruijn sequence of order $n$ from a given $n$-stage state.

Based on the special LFSR whose characteristic polynomial is $f_{\rm PRR}(x) = x^{n} + x^{n-1} + x + 1 = (x^{n-1}+1) (x+1)$ for $n \geq 3$, we have come up with two generic classes of successor rules. Each class contains numerous distinct successor rules, yielding mostly pairwise inequivalent de Bruijn sequences. The resulting family is of size $O(2^{n-3})$. The time and space complexities to generate the next bit in each of the instances are both $O(n)$. In the appendix we supply a basic implementation in {\tt C} that produces all of the de Bruijn sequences based on Theorems~\ref{prop:1}, \ref{prop:2}, \ref{prop:3}, and \ref{prop:4}.

The route that we propose here can be particularly useful to analyse the suitability of an arbitrary FSR whose cycles have small periods. Identifying more classes of suitable FSRs that efficiently produce larger families of de Bruijn sequences via successor rules is an interesting direction to investigate. Adding specific desirable properties for the resulting sequences could be an intriguing challenge to explore.

\section*{Acknowledgements}

We thank the reviewers for making us aware of an MSc thesis \cite{SalaTh}, a recent preprint \cite{Salasame}, and a webpage \cite{Sawada-web}. We also thank their helpful suggestions to clarify our contribution and improve the presentation of this paper.

The work of Z.~Chang is supported by the National Natural Science Foundation of China under Grant 61772476. Nanyang Technological University Grant Number M4080456 supports the work of M.~F.~Ezerman. The research of Q.~Wang is partially support by the Natural Sciences and Engineering Research Council of Canada (RGPIN-2017-06410).

\section*{Appendix: Source Code}

A basic implementation in {\tt C} is supplied for the interested readers.

\lstinputlisting[language=C]{PRR.c}


\begin{thebibliography}{15}
	\bibitem{AEB87}
	T. van Aardenne-Ehrenfest and N.~G.~de~{B}ruijn, ``Circuits and trees in oriented linear graphs,'' \emph{Simon Stevin}, vol.~28, pp.~203--217, 1951.
	
	\bibitem{Bruijn46}
	N. G. de Bruijn,  ``A combinatorial problem,'' \emph{Proc. Kon. Ned. Akad. Wetensh.},  vol.~49,  no.~7,  pp.~758--764,  June 1946.
	
	\bibitem{Chan82}
	A. H. Chan, R. A. Games, and E. L. Key, ``On the complexities of de Bruijn sequences,''
	{\it J. Combin. Theory Ser. A}, vol.~33, no.~3, pp.~233--246, Nov. 1982.
	
	\bibitem{chang19}
	Z. Chang, M. F. Ezerman, P. Ke, and Q. Wang, ``General criteria for successor rules to efficiently generate binary de Bruijn sequences,'' Online available at \url{http://arxiv.org/abs/1911.06670}.
	
    \bibitem{Ding}
    C. Ding, \emph{Chinese Remainder Theorem: Applications in Computing, Coding, Cryptography}. World Scientific Press, 1999.	

	\bibitem{Dragon18}
	P.~B. Dragon, O.~I. Hernandez, J.~Sawada, A.~Williams, and D.~Wong,
	``Constructing de {B}ruijn sequences with co-lexicographic order: The $k$-ary
	grandmama sequence,'' \emph{Eur. J. Comb.}, vol.~72, pp. 1--11, 2018.
	
    \bibitem{Eldert58}
    C. Eldert, H. Gray, H. Gurk, and M. Rubinoff, ``Shifting counters,'' \emph{Trans. Amer. Inst. Electrical Engineers, Part I: Communication and Electronics}, vol. 77, pp. 70--74, 1958.

	\bibitem{Etzion84}
	T.~Etzion and A.~Lempel, ``Algorithms for the generation of full-length shift-register sequences,'' \emph{IEEE Trans. Inform. Theory}, vol.~30,
	no.~3, pp. 480--484, May 1984.

	\bibitem{Farhi}
	B.~Farhi, ``An identity involving the least common multiple of binomial coefficients and its application,'' \emph{Amer. Math. Monthly}, vol. 116, no.~9, pp. 836--839, 2009.	

    \bibitem{Fleury}
    M. Fleury, ``Deux probl\`{e}mes de g\'{e}om\'{e}trie de situation,'' \emph{Journal de math\'{e}matiques \'{e}l\'{e}mentaires} vol. 42, pp. 257--261, 1883.

	\bibitem{Fred72}
	H.~Fredricksen, ``Generation of the {F}ord sequence of length $2^n$, $n$
	large,'' \emph{J. Combinat. Theory, Ser. A}, vol.~12, no.~1,
	pp. 153--154, Jan. 1972.
	
	\bibitem{Fred01}
	H. Fredricksen , ``A class of nonlinear de Bruijn cycles,'' \emph{J. Combinat. Theory, Ser. A.} vol. 19, pp. 192--199, Sep. 1975
	
	\bibitem{Fred82}
	H. Fredricksen,  ``A survey of full length nonlinear shift register cycle algorithms, ''
	\emph{SIAM Rev.},  vol. 24,  no. 2,  pp. 195--221, Apr. 1982.
	
	\bibitem{Gabric18}
	D.~Gabric, J.~Sawada, A.~Williams, and D.~Wong, ``A framework for constructing
	de Bruijn sequences via simple successor rules,'' \emph{Discrete
		Math.}, vol. 341, no.~11, pp. 2977--2987, Nov. 2018.
	
	\bibitem{Gabric19}
	D.~Gabric, J.~Sawada, A.~Williams, and D.~Wong, ``A successor rule framework
	for constructing $k$-ary de Bruijn sequences and universal cycles,''
	\emph{IEEE Trans. Inform. Theory}, vol.~66,	no.~1, pp. 679--687, Jan. 2020.
	
	\bibitem{Golomb}
	S.~W. Golomb, \emph{Shift Register Sequences}. Laguna Hills, CA, USA: Aegean Park Press, 1981.
	
	\bibitem{GG05}
	S.~W. Golomb and G.~Gong, \emph{Signal Design for Good Correlation: for
		Wireless Communication, Cryptography, and Radar}. New York: Cambridge Univ. Press, 2004.
	
    \bibitem{Huang90}
	Y.~Huang, ``A new algorithm for the generation of binary de Bruijn sequences,''
	\emph{J. Algorithms}, vol.~ 11, no.~1, pp. 44--51, Mar. 1990.

	\bibitem{Jansen91}
	C.~Jansen, W.~Franx, and D.~Boekee, ``An efficient algorithm for the generation
	of {de Bruijn} cycles,'' \emph{IEEE Trans. Inform. Theory},
	vol.~37, no.~5, pp. 1475--1478, 1991.
	
	\bibitem{LN97}
	R.~Lidl and H.~Niederreiter, \emph{Finite Fields}, ser. Encyclopaedia of
	Mathematics and Its Applications. New York: Cambridge Univ. Press, 1997.
	
	\bibitem{Ral82}
	A.~Ralston, ``De Bruijn sequences: A model example of the interaction of discrete mathematics and computer science,'' \emph{Math. Mag.}, vol.~55, no.~3, pp.~131--143, May 1982.
	
    \bibitem{SalaTh}
    E. Sala, ``Exploring the greedy constructions of de Bruijn sequences,'' Master's thesis, University of Guelph, 2018.

    \bibitem{Salasame}
    E. Sala, J. Sawada, and A. Alhakim, ``Efficient construction of the prefer-same de {B}ruijn sequence'', submitted manuscript 2020.
    Online available \url{http://www.cis.uoguelph.ca/~sawada/papers/pref-same2.pdf}.	

    \bibitem{Sawada-web}
    J. Sawada, ``De Bruijn sequence and universal cycle constructions,'' Online at \url{http://debruijnsequence.org/db/shift}.

    \bibitem{Sawada16}
	J.~Sawada, A.~Williams, and D.~Wong, ``A surprisingly simple de {B}ruijn
	sequence construction,'' \emph{Discrete Math.}, vol. 339, no.~1, pp. 127--131, Jan. 2016.
	
	\bibitem{Sawada17}
	J.~Sawada, A.~Williams, and D.~Wong, ``A simple shift rule for $k$-ary de
	{B}ruijn sequences,'' \emph{Discrete Math.}, vol. 340, no.~3, pp. 524--531, 2017.
	
	\bibitem{Sloane02}
	N.~J.~A.~Sloane, ``On single-deletion-correcting codes,'' Online available \url{http://neilsloane.com/doc/dijen.pdf}. An earlier version appeared in \emph{Codes and Designs}, Ohio State University, May 2000 (Ray-Chaudhuri Festschrift), K. T. Arasu and A. Seress (eds.), Walter de Gruyter, Berlin, 2002, pp. 273--291.
\end{thebibliography}
\end{document}